\numberwithin{equation}{section}
\newtheorem{prop}{Proposition}[section]
\newtheorem{defin}{Definition}[section]
\newtheorem*{defin*}{Definition}
\newtheorem{rem}{Remark}[section]
\newtheorem{theorem}{Theorem}[section]
\newenvironment{customthm}[1]
  {\innercustomthm}
  {\endinnercustomthm}
\newcommand{\tr}{\mathrm{Tr\,}}
\newcommand{\dd}{\mathrm{d}}
\definecolor{MathematicaBlue}{rgb}{0.368417, 0.506779, 0.709798}
\definecolor{MathematicaOrange}{rgb}{0.880722, 0.611041, 0.142051}
\newcommand{\ii}{\mathrm{i}}
\newcommand{\GOE}{\mathrm{GOE}}
\newcommand{\RQCM}{\mathrm{RQCM}}
\renewcommand{\epsilon}{\varepsilon}
\renewcommand{\O}{\mathsf{O}}
\newcommand{\Sp}{\mathsf{Sp}}
\newcommand{\K}{\mathsf{K}}
\begin{document}
\title{Generating random Gaussian states}
\author{Leevi Lepp\"aj\"arvi}
\address{RCQI, Institute of Physics, Slovak Academy of Sciences,
D\'ubravsk\'a cesta 9, 84511 Bratislava, Slovakia}
\email[Leevi Lepp\"aj\"arvi]{\href{mailto:leevi.leppajarvi@savba.sk}{leevi.leppajarvi@savba.sk}}
\author{Ion Nechita}
\address{Laboratoire de Physique Th\'eorique, Universit\'e de Toulouse, CNRS, UPS, France}
\email[Ion Nechita]{\href{mailto:ion.nechita@univ-tlse3.fr}{ion.nechita@univ-tlse3.fr}}
\author{Ritabrata Sengupta}
\address{Department of Mathematical Sciences, Indian Institute of
Science Education \& Research (IISER) Berhampur, Transit campus,
Berhampur 760 010, Ganjam, Odisha, India}
\email[Ritabrata Sengupta]{\href{mailto:rb@iiserbpr.ac.in}{rb@iiserbpr.ac.in}}
\begin{abstract}
We develop a method for the random sampling of (multimode) Gaussian states in terms of their covariance matrix, which we refer to as a random quantum covariance matrix (RQCM). We analyze the distribution of marginals and demonstrate that the eigenvalues of an RQCM converge to a shifted semicircular distribution in the limit of a large number of modes. We provide insights into the entanglement of such states based on the positive partial transpose (PPT) criteria. Additionally, we show that the symplectic eigenvalues of an RQCM converge to a probability distribution that can be characterized using free probability. We present numerical estimates for the probability of a RQCM being separable and, if not, its extendibility degree, for various parameter values and mode bipartitions.

\vspace{1cm}

\noindent \textbf{Keywords.} Gaussian state, random matrices,  extendability, entanglement, PPT criterion, symplectic eigenvalues.  

\smallskip

\noindent \textbf{Mathematics Subject Classification
(2010):} 81P40, 81P99, 94A15.
\end{abstract}

\maketitle

\tableofcontents

\newpage

\section{Introduction}

Quantum mechanics can be considered as a non-commutative version of classical probability theory \cite{mackey, krp4a, meyer, krp7}. However most often in quantum information theory, probability theory was used in the areas which rely on the measurement statistics. The trend changed with the seminal paper of Hayden et al \cite{hayden2006aspects} which used random matrix and concentration of measure techniques in quantum information theory.  Henceforth, random matrix techniques have been used extensively in this area. For construction of random states, see the paper of \.{Z}yczkowski et al \cite{zyczkowski2011generating}, and the review article \cite{collins2016random}. 
\par Continuous variable quantum systems are one of the most useful systems for the quantum information processing. Among these, Gaussian states play a crucial role. Gaussian states and their applications are studied extensively in the literature. These concepts, explained from a mathematical point can been seen in the book of Holevo  \cite{MR2797301}. Gaussian  states are easy to prepare, manipulate, or measure  and have wide range of applications in quantum optics and quantum communications \cite{RevModPhys.84.621, adesso_osid}.  
\par There are very few works on using random matrix techniques and Gaussian states. The only attempt as per our knowledge if the work of  Serafini et al \cite{MR2345312} which was further developed by Fukuda and K\"onig \cite{fukuda-koenig-typ}. In these papers, the authors introduced a method to generate the random Gaussian states. The novelty of the present work is that it uses a different method of sampling. Furthermore, this also explores the problem of entanglement detection for such states.  

Gaussian states can be represented by their covariance matrices which are real positive definite matrices satisfying a Heisenberg-type uncertainty relation. This is the point of view we adopt in this work for sampling Gaussian states. We are going to modify a well-known ensemble of real random matrices (the Gaussian Orthogonal Ensemble - GOE) in order to impose the uncertainty condition. 

Recall that an element of the GOE is a real symmetric random matrix, with Gaussian entries that are independent (up to the symmetry condition). Wigner \cite{wigner1955characteristic} famously showed that the eigenvalues of GOE random matrices converge to the semicircle distribution (see Figure \ref{fig:GOE-histogram}). In particular, this means that GOE matrices are likely to have negative eigenvalues, so they cannot be covariance matrices of Gaussian quantum states\footnote{Let us point out that the word ``Gaussian'' is used above in two different contexts. On the one hand, in quantum theory, an $n$-mode Gaussian state is a quantum state on $L^2(\mathbb R^n)$ with the property that its expectation with respect to position and momentum observables on different modes are Gaussian random variables. On the other hand, in random matrix theory, the Gaussian Orthogonal Ensemble is a probability distribution on the set of symmetric real matrices which has independent, identically distributed entries (up to symmetry).}. In order to fulfill the uncertainty principle, we shall uniformly shift the GOE element by the minimal amount needed to satisfy the uncertainty principle. We introduce in this way the ensemble of random matrices that we shall study: 
\begin{defin*}
    A \emph{random quantum covariance matrix} (RQCM) is defined by
    $$S_G:=G + \lambda_{\max}(\ii J_{2n} - G) \cdot I_{2n},$$
    where $G$ is a GOE random matrix and $J_{2n} = \big[\begin{smallmatrix}
        0 & 1 \\ -1 &0
    \end{smallmatrix}\big]^{\oplus n}$.
\end{defin*}
Note that one can choose freely the variance of the elements of the GOE ensemble, so in this way we obtain a one-parameter family of probability distributions on random quantum covariance matrices; we refer the reader to Definition \ref{def:RQCM-ensemble} for the precise statements. 

The rest of the paper is devoted to the study of this ensemble, with a focus on entanglement properties of random Gaussian states on $m+n$ modes. In particular, we show that:
\begin{itemize}
    \item the choice of shifting a given matrix to enforce the Heisenberg uncertainty relation is motivated by an optimization problem related to the operator norm (\cref{sec:closest-covariance})
    \item the distribution of the \emph{$m$-mode marginal} of a $(m+n)$-mode RQCM is part of the same ensemble of distributions (\cref{prop:marginal-RQCM} and \cref{prop:marginal-large-mn})
    \item the \emph{eigenvalues} of a RQCM converge, in the large number of modes limit, to a shifted semicircular distribution (\cref{thm:RQCM-eigenvalues})
    \item RQCMs (almost) satisfy the \emph{positive partial transposition} (PPT) criterion, in the same asymptotic regime (\cref{thm:PPT})
    \item the \emph{purity} of a RQCM decreases exponentially with the number of modes (\cref{prop:purity}) 
    \item the \emph{symplectic eigenvalues} of a RQCM converge to a probability distribution that can be characterized using free probability (\cref{thm:symplectic-eigenvalues})
    \item RQCM states have very interesting entanglement and extendability properties, that we probe numerically in \cref{sec:entanglement}, in different scenarios corresponding to several bipartitions of the number of modes and the variance parameter of the RQCM distribution. 
\end{itemize}

The results described above encapsulate most of the important properties of random Gaussian states. We argue that the ensemble of quantum covariance matrices that we introduced is natural, both from a physical perspective (it is invariant with respect to the action of the ortho-symplectic group) and from a mathematical perspective. Importantly, we relate the study of their properties to free probability theory \cite{voiculescu1992free,nica2006lectures, mingo2017free}.

Let us note that there has been earlier work on random Gaussian quantum states \cite{serafini,fukuda-koenig-typ}. Importantly, in the latter work, the authors take a completely different approach: they fix a set of symplectic eigenvalues, then they generate a random pure Gaussian state by randomly rotating the diagonal symplectic matrix, and finally they take the partial trace to obtain a general Gaussian state. In \cite{fukuda-koenig-typ}, the major hurdle is the fact that the group
\[\Sp(2n,\mathbb{R}) =\{ L \in M(2n, \mathbb{R}): L J_{2n}L^t =
J_{2n}\},\] 
which is the symmetry group for pure Gaussian states, is not compact; hence
sampling on this group is difficult. This approach is fundamentally different than ours: \emph{we construct directly the mixed quantum states, bypassing the pure states}, for which there is no canonical invariant probability distribution. Our goal is to study different properties of typical Gaussian states by using random matrix theory. The key part in this work is the novel sampling technique which will be discussed in \cref{sec:RQCM}. In Sections \ref{sec:large-RQCM}, \ref{sec:symplectic}, \ref{sec:entanglement} we study different spectral and entanglement properties of the newly introduced ensemble, using various techniques from \emph{random matrix theory} and \emph{free probability theory}. The results are presented in such a way that one can use them readily to perform analytical computations and numerical simulations for problems involving Gaussian states described by covariance matrices. Hence, we hope that our work provides a useful toolbox for researchers investigating entanglement properties of such quantum states, similarly to previous investigations in the discrete setting \cite{zyczkowski2011generating,MR4266112}.

\bigskip

The paper is organized as follows. In Section \ref{sec:background}, we are giving a short introduction of Gaussian states and entanglement, which will be used later in the paper. In Section \ref{sec:closest-covariance} we justify our approach to defining random quantum covariance matrices by showing that our method is equivalent to picking the closest (in operator norm) quantum covariance matrix to a GOE element. In the rest of \cref{sec:RQCM} we introduce random quantum covariance matrices and study their most basic properties. Sections \ref{sec:large-RQCM} and \ref{sec:symplectic} deal with properties related to usual and symplectic eigenvalues in the large number of modes asymptotic limit. Finally, Section \ref{sec:entanglement}, more numerical in nature, gathers results about the entanglement of typical quantum covariance matrices.

A \textsf{Mathematica} notebook containing numerical routines to sample random quantum covariance matrices and to test important entanglement-related properties is available at \cite{notebook}.

\section{Gaussian states and their entanglement}\label{sec:background}

A quantum state is a positive semidefinite trace class operator with
trace 1, in a separable complex Hilbert space. Consider the `$n$-mode'
Fock space $\Gamma(\mathbb{C}^n)$. A Gaussian state $\rho$ acting on
this space is a state such that its expectation with respect to the
position or momentum operator in any mode gives a classical Gaussian
distribution. In this work, we plan to explore properties of Gaussian
states. 

A state $\rho$ in $\Gamma(\mathcal{H})$ with a Hilbert space $\mathcal{H}=
\mathbb{C}^n$ is an $n$-mode Gaussian state if its
Fourier transform $\hat{\rho}$ is given by 
\begin{equation}\label{eq:1.4}
\hat{\rho}(\bm{x} + \imath \bm{y}) = \exp
\left[-\imath \sqrt{2} (\bm{l}^T\bm{x} -
\bm{m}^T\bm{y}) -\frac{1}{2} \left[\tau\begin{pmatrix} \bm{x}\\
\bm{y} \end{pmatrix}\right]^T S  \left[\tau\begin{pmatrix} \bm{x}\\
\bm{y} \end{pmatrix}\right] \right].
\end{equation}
for all $\bm{x},~\bm{y}\in \mathbb{R}^n$ where $\bm{l},~
\bm{m}$ are the momentum-position mean vectors and $S/2$ their
covariance matrix. $\tau:\mathbb{R}^{2n} \rightarrow
\mathbb{R}^{2n}$ is given by $\tau(x_1, \cdots, x_n, y_1,
\cdots, y_n)^T= (x_1,y_1, \cdots, x_n, y_n)^T$ \cite{krpg}.

\par If $\rho$ is a state of a quantum system and
$X_i,\,i=1,2$ are two real-valued observables, or
equivalently, self-adjoint operators with finite second
moments in the state $\rho$ then the covariance between $X_1$
and $X_2$ in the state $\rho$ is the scalar quantity
\[\tr\left(\frac{1}{2}(X_1 X_2 +X_2 X_1)\rho\right) - \left(\tr X_1
\rho\right) \cdot \left(\tr X_2\rho\right),\]
 which is denoted by
$\mathrm{Cov}_\rho (X_1,X_2)$. Suppose $q_1,p_1;\,
q_2,p_2;\, \cdots ;\, q_n,p_n$ are the position - momentum
pairs of observables of a quantum system with $n$ degrees of
freedom obeying the canonical commutation relations. Then
we express 
\[(X_1,X_2,\cdots,X_{2n}) =
(q_1,-p_1,q_2,-p_2,\cdots,q_n,-p_n).\]
If $\rho$ is a state in which all the $X_j$'s have finite
second moments we write 
\begin{equation}\label{eq:1.1}
S_\rho = \Big[ \mathrm{Cov}_\rho(X_i, X_j)\Big]_{i,j=1}^{2n}.
\end{equation} 
We call $S_\rho$ the covariance matrix of the position
momentum observables. If we write 
\begin{equation}\label{eq:1.2}
J_{2n}= \begin{bmatrix}\begin{array}{rr}
0 & 1 \\ -1 & 0
\end{array} &&& \\
& \begin{array}{rr}
0 & 1 \\ -1 & 0
\end{array} &&\\ 
 &  & \ddots & \\
&&& \begin{array}{rr}
0 & 1 \\ -1 & 0
\end{array}
\end{bmatrix} = \begin{bmatrix} 0 & 1 \\ -1 & 0 \end{bmatrix}\otimes I_n
\end{equation}
where $I_n$ is the identity matrix of order $n$; or equivalently we may write it as $\big[\begin{smallmatrix}
        0 & 1 \\ -1 &0
    \end{smallmatrix}\big]^{\oplus n}$ for the $2n \times 2n$ block diagonal
matrix, the complete Heisenberg uncertainty relations for
all the position and momentum observables assume the form
of the following matrix inequality 
\begin{equation}\label{eq:1.3}
S_\rho + \ii J_{2n} \geq 0.
\end{equation} 
Conversely, if $S$ is any real $2n \times 2n$ symmetric
matrix obeying the inequality $S +  \ii
J_{2n} \geq 0$, then there exists a state $\rho$ such that
$S$ is the covariance matrix $S_\rho$ of the observables
$q_1,-p_1;\,q_2,-p_2;\, \cdots ;\, q_n,-p_n$. In such a case $\rho$ can be
chosen to be a Gaussian state with mean zero. For a $n$-mode covariance matrix $S$ the positive eigenvalues of the matrix $\ii J_{2n} S$ are called the \emph{symplectic eigenvalues} of $S$.

\par The importance of finite mode Gaussian states and their
covariance matrices in general quantum theory as well as
quantum information has been highlighted extensively in the
literature. A comprehensive survey of Gaussian states and
their properties can be found in the
book of Holevo \cite{MR2797301}. For their applications to 
quantum information theory the reader is referred to the
survey article by Weedbrook et al \cite{RevModPhys.84.621}
as well as Holevo's book \cite{ MR2986302}.  For our
reference we use \cite{arvindsurvey, krpg,
MR3070484} for Gaussian states. For notations in the
following sections we use \cite{krprb} and \cite{krprb2}.

\subsection{Entanglement of Gaussian states}

\par One of the most important problems in quantum mechanics
as well as quantum information theory is to determine whether a
given bipartite state is separable or entangled \cite{NC}.
There are several methods in tackling this problem leading
to a long list of important publications. A detailed
discussion on this topic is available in the survey articles
by Horodecki et al \cite{RevModPhys.81.865}, and 
G{\"u}hne and T{\'o}th \cite{guth}. One such condition which
is both necessary and sufficient for separability in finite
dimensional product spaces is complete extendability
\cite{PhysRevA.69.022308}. Let us denote by $\mathcal{B(H)}$ the set of bounded operators on a Hilber space $\mathcal{H}$.

\begin{defin}\label{def:1}
Let $k \in \mathbb{N}$. A state $\rho \in
\mathcal{B}(\mathcal{H}_A \otimes \mathcal{H}_B)$ is said to
be $k$-extendable with respect to system $B$ if there is a
state $\tilde{\rho} \in \mathcal{B}(\mathcal{H}_A \otimes
\mathcal{H}_B^{\otimes k})$ which is invariant under any
permutation in $\mathcal{H}_B^{\otimes k}$ and $\rho =
\tr_{\mathcal{H}_B^{\otimes (k-1)}}\tilde{\rho}$, $k \ge 2$. 
\par A state $\rho \in \mathcal{B}(\mathcal{H}_A \otimes
\mathcal{H}_B)$ is said to be completely extendable if it
is $k$-extendable for all $k \in \mathbb{N}$. 
\end{defin}
The following theorem of Doherty, Parrilo, and
Spedalieri \cite{PhysRevA.69.022308} emphasizes the
importance of the notion of complete extendability.

\begin{customthm}{A}\cite{PhysRevA.69.022308} \label{th:a}
A bipartite state $\rho \in \mathcal{B}(\mathcal{H}_A \otimes
\mathcal{H}_B)$ is separable if and only if it is completely
extendable with respect to one of its subsystems.
\end{customthm}

\par It is fairly simple to see that separability implies complete
extendability. The proof of the converse depends
on an application of the quantum de Finetti theorem
\cite{MR0241992, MR0397421}, according to which any
exchangable state is, indeed, separable. The link between separability and
extendability has found applications in quantum information theory
\cite{MR2825510, MR3210848}. Here we study the same in
the context of quantum Gaussian states.

This proof directly follows from work of Hudson and Moody
\cite{MR0397421}. The problem of finding necessary and
sufficient conditions for $k$-extendability of non-Gaussian states is
open. 

\begin{defin}[Gaussian extendability]\label{def:2}
Let $k \in \mathbb{N}$. A Gaussian state $\rho_g$ in
$\Gamma(\mathbb{C}^m) \otimes \Gamma(\mathbb{C}^n)$ is said to
be Gaussian $k$-extendable with respect to the second system  if there is a
Gaussian state $\tilde{\rho_g}$ in  $\Gamma(\mathbb{C}^m) \otimes
\Gamma(\mathbb{C}^n)^{\otimes k}$  which is invariant under any
permutation in $\Gamma(\mathbb{C}^n)^{\otimes k}$ and $\rho_g =
\tr_{\Gamma(\mathbb{C}^n)^{\otimes (k-1)}}\tilde{\rho_g}$,
$k\ge 2$. 
\par A Gaussian state $\rho_g$ in $\Gamma(\mathbb{C}^m)
\otimes \Gamma(\mathbb{C}^n)$ is said to be Gaussian
completely extendable if it is Gaussian $k$-extendable
for every $k \in \mathbb{N}$.
\end{defin}
\par Entanglement property of a Gaussian state depends only on
its covariance matrix. Hence without loss of generality, we
can confine our attention to the Gaussian states with
mean zero. Thus an $(m+n)$-mode mean zero Gaussian state in
$\Gamma(\mathbb{C}^m) \otimes \Gamma(\mathbb{C}^n)$ is
uniquely determined by a $2(m+n) \times 2(m+n)$ covariance
matrix 
\begin{equation}\label{eq:m+n-modes}
    S = \begin{bmatrix} A & B \\ B^T & C \end{bmatrix}.
\end{equation}
In this paper, we shall call the matrices above \emph{quantum covariance matrices}, following the terminology from \cite{lami2018gaussian}. Here $A$ and $C$ are covariance matrices of the $m$ and $n$-mode
marginal states respectively. 

\par If $\rho(\bm{0},\bm{0};S)$, written in short as
$\rho(S)$ in $\Gamma(\mathbb{C}^m) \otimes
\Gamma(\mathbb{C}^n)$ is $k$-extendable with respect to the
second system, then there exists a real matrix $\theta_k$ of
order $2n \times 2n$ such that the extended matrix
\begin{equation}\label{eq:1.5}
S_k = \left[ \begin{array}{c|cccc}
A & B & B & \cdots & B \\\hline
B^T & C & \theta_k & \cdots & \theta_k \\
B^T & \theta_k^T & C & \cdots & \theta_k\\
\vdots & \vdots & \vdots & \ddots & \vdots\\
B^T & \theta_k^T & \theta_k^T & \cdots & C
\end{array} \right]
\end{equation} 
is the covariance matrix of a Gaussian state in
$\Gamma(\mathbb{C}^m) \otimes \Gamma(\mathbb{C}^n)^{\otimes
k}$. Then it satisfies inequality (\ref{eq:1.3}) in the form
\begin{equation}\label{eq:1.6}
S_k + \ii J_{2(m+kn)} \ge 0.
\end{equation}

\begin{customthm}{B}\label{th:1}\cite{krprb3}
Let $\rho$ be a bipartite Gaussian state in $\Gamma(\mathbb{C}^m)
\otimes \Gamma(\mathbb{C}^n)$ with covariance matrix $S =
\begin{bmatrix} A & B \\ B^T & C \end{bmatrix}$, where $A$
and $C$ are marginal covariance matrices of the first and second
system respectively. Then $\rho$ is completely extendable with respect
to the second system if and only if there exists a real positive
matrix $\theta$ such that 
\begin{equation}\label{eq:xx} 
C+ \ii J_{2n} \ge \theta \ge B^T \left( A + \ii  J_{2m}
\right)^{-} B,
\end{equation} 
where $\left( A + \ii J_{2m}\right)^{-}$ is the
Moore-Penrose inverse of $A + \ii  J_{2m}$.
\end{customthm}
\noindent From this the following result can be constructed. 
\begin{customthm}{C}\label{th:2}
Any separable Gaussian state in a bipartite system is
completely extendable and conversely every completely extendable
Gaussian state is separable. 
\end{customthm}
The authors of the paper \cite{krprb3} have shown that the result in
 \cref{th:2} is true for any state (need not be Gaussian) in
bipartite Fock space as well. However it may not be possible to get
any analogous matrix inequality as entanglement in such systems is
more completed than the finite dimensional versions. Later Lami et al
\cite{lami-prl-19} gave an improved version of the \cref{th:1}
where they gave explicit bounds for the different level of
entanglement. The result is as follows. 
\begin{customthm}{D}\label{th:3} 
Let $\rho_{AB}$ be a $k$-extendible (not necessarily
Gaussian) state of $m + n$ modes with covariance matrix $S$.
Then there exists a $2n \times 2n$ quantum covariance matrix
$\Delta \ge \ii J_{2n}$ for the space $B$ such that
\begin{equation}\label{xx2}
S \ge \ii J_{2m} \oplus \left[\left(1 - \frac{1}{k}\right) \Delta
+ \frac{1}{k}\ii J_{2n}\right].
\end{equation}
Moreover, the above condition is necessary and sufficient for
$k$-extendibility when $\rho_{AB}$ is Gaussian. In this case the
equation \eqref{xx2} can further be simplified as 
\begin{equation}\label{xx3}
\ii J_{2n} \le \Delta \le \frac{k}{k-1}\left[C - B^T ( A - \ii
J_{2m})^{-1} B\right] - \frac{1}{k-1}  \ii J_{2n}. 
\end{equation}
Furthermore, if $\rho_{AB}$ is an $m+n$ mode $k$-extendible Gaussian state, then 
\begin{equation}\label{eq:k-extendible_distance}
\|\rho_{AB}-\mathrm{SEP}(m,l)\|_1 \leq 2n/k,
\end{equation}
where $\mathrm{SEP}(m,n)$ is the set of bipartite separable states of the systems of $m$ and $n$ modes and $||\cdot||_1$ is the trace norm.
\end{customthm}
 
\par Both the complete extendability and $k$-compatibility criteria can be cast as semidefinite programs (SDPs). While one can in principle determine the separability of any covariance matrix by running the SDPs, for sufficiently large matrices solving the SDPs might become cumbersome because of the excessive run-time. However, there are also other, perhaps simpler, entanglement criteria. One of the most well-known is the \emph{positive partial transpose (PPT) criteria}. The  PPT criteria of Gaussian states can be expressed in the following
way. Consider a $(m+n)$-mode Gaussian state with covariance matrix $S$
as given in the Theorem \ref{th:2}. It naturally satisfies the equation
\eqref{eq:1.2} with appropriate dimension. It is considered a PPT state if 
\begin{equation}\label{eq:PPT-condition} 
S+ \ii
\begin{bmatrix}
J_{2m} & \\
 &-J_{2n}
\end{bmatrix} \ge 0.
\end{equation}
Simon \cite{PhysRevLett.84.2726} showed that a 2-mode Gaussian state
is separable if and only if its covariance matrix $S$ satisfies
\eqref{eq:PPT-condition}. Furthermore it has been proven that this also holds
for any $(n+1)$-mode Gaussian state where the bipartition is taken as
in the $n$-mode vs 1-mode way. However, in general the PPT condition is not necessary and
sufficient for separability. In fact, Werner and Wolf \cite{PhysRevLett.86.3658}
constructed  examples of Gaussian states on 2-mode $\times$ 2-modes
settings which satisfies \eqref{eq:PPT-condition} but is entangled.

\subsection{Previous work on random Gaussian states}  
Here we give the two different
methods of sampling given in Fukuda and K\"onig \cite{fukuda-koenig-typ} using techniques proposed earlier by Serafini et al
\cite{MR2345312}. 

Consider the system $\mathcal{H}_A \otimes \mathcal{H}_B =
L^2(\mathbb{R})^{\otimes k} \otimes L^2(\mathbb{R})^{\otimes (n-k)}$,
where the system is of $k$-modes with $(n-k)$-modes of environment.
Consider a pure Gaussian state $\ket{\Phi} \in \mathcal{H}_A \otimes \mathcal{H}_B$
bounded by the compactness criteria (i.e. energy constrain) 
\[\langle \Phi |H_{AB}| \Phi \rangle \le E,\]
with the Hamiltonian 
\[H_{AB} = \sum_{j=1}^n (Q_j^2 + P_j^2).\]
This can be diagonalised by \emph{passive} way, i.e. by using the
ortho-symplectic group $\K(2n) := \Sp(2n) \cap
\O(2n)$\footnote{Note that $K(2n)$ consists of the operators
in $\Sp(2n)$ which are orthogonal. This is the largest compact
subgroup of $\Sp(2n)$. Hence an unique  Haar measure can be defined on
it which can be used for sampling.}  we can
write this as the diagonal matrix $\hat{Z_n} = Z_n  \oplus Z_n^{-1}$
where $Z_n = \operatorname{diag}(z_1, \cdots, z_n)$, with $z_j \ge 1$ for all $j$, and   
\[\sum_{j=1}^n \left( z_j +\frac{1}{z_j} \right) \le E.\] 
The left hand side is actually the energy $\langle
\Phi|H_{AB}|\Phi\rangle $ of the state $\ket{\Phi}$. The method can be
expressed as fellows. First to choose $z_j$'s randomly with required
properties and bounds, which in turn gives a pure state in the state
space. After this we may apply a random ortho-symplectic
transformation. This is allowed as the group $\K(2n)$ is compact.
Following this we may remove the extra $(n-k)$ modes associated with
environment by using partial trace to generate a $k$-mode random mixed
Gaussian state. 

\bigskip

Previously, Serafini et al \cite{MR2345312} considered the two following
measured based on the above protocol. 
\begin{description}
\item[Microcanonical measure $\mu_{micro}$] This is done by first drawing $(E_1 ,
\cdots, E_n )$ uniformly (according to the measure induced by the Lebesgue measure on $\mathbb{R}^n$) from the set 
\[\Gamma_E = \left\lbrace(E_1, \cdots, E_n): E_j \ge 2 ~\forall j, ~
\sum_{j=1}^n E_j \le E\right\rbrace.\]
Then 
\[z_j = \frac{1}{2} (E_j + \sqrt{E_j^2 -4}),\]
for $j=1, \cdots, n$ and drawing pure states from $\mu_{micro}$. 
\item[Canonical measure $\mu_{canonical}$] This is achieved by drawing
$E=(E_1, \cdots, E_n)$ based on Boltzmann distribution 
\[\dd p(E)= \frac{1}{T^n} \exp\left[-\frac{1}{T} \left(\sum_{j=1}^n E_j -2n
\right)\right]\dd E_1 \cdots \dd E_n\]
with temperature $T= \frac{E}{n}$ and setting 
\[z_j = \frac{1}{2} (E_j + \sqrt{E_j -4})\]
for $j=1, \cdots, n$ and drawing pure states. 
\end{description}

\section{Random quantum covariance matrices}\label{sec:RQCM}

In this section, we shall introduce a random matrix ensemble of Gaussian covariance matrices. Our inspiration comes from the discrete case, where such ensembles of random density matrices have found many applications in quantum information theory and beyond. Ensembles of (finite dimensional) density matrices \cite{braunstein1996geometry,hall1998random,zyczkowski2001induced,sommers2004statistical,zyczkowski2011generating} have been studied thoroughly in relation to many topics such as quantum entanglement or quantum chaos. Ensembles of quantum channels have been successfully used to prove many important results in quantum information theory \cite{hayden2006aspects,hastings2009superadditivity}; see \cite{collins2016random} for a review of these topics.

\medskip

Recall that a Gaussian covariance matrix is a \emph{real} symmetric matrix $S \in \mathcal M_{2n}(\mathbb R)$ satisfying the condition
$$S \geq \ii J_{2n}.$$
Mathematically, the condition above is interesting since it requires a \emph{complex} positive semidefiniteness condition of a real object. A random quantum covariance matrix will be defined to be an element of the Gaussian Orthogonal Ensemble (GOE) shifted by a multiple of the identity matrix in order for the condition $S \geq \ii J_{2n}$ to be satisfied. Our approach is motivated by the following two independent facts: 
\begin{itemize}
    \item the ensemble of random quantum covariance matrices we construct has is invariant under the ortho-symplectic group $\K(2n) = \Sp(2n) \cap \O(2n)$, see \cref{prop:ortho-symplectic-invariance};
    \item to a GOE element $G$, we associate the closest (in operator norm) quantum covariance matrix $H$, see \cref{sec:closest-covariance}.
\end{itemize}

We start by investigating the problem of finding the closes quantum covariance matrix to a given matrix. After recalling some basic facts about GOE random matrices, we introduce the ensemble of random quantum covariance matrices (RQCM) along with some of its basic properties. Later, in Section \ref{sec:large-RQCM} we study the large $n$ limit of this ensemble.

\subsection{Closest quantum covariance matrix}\label{sec:closest-covariance}

Consider an arbitrary real symmetric matrix $G \in \mathcal M_{2n}^{sa}(\mathbb R)$. We are interested in finding the closest \emph{quantum covariance matrix} to $G$, that is
\begin{align}
\nonumber\min &\quad\|G-H\| \\
\label{eq:closest-covariance}\text{s.t.} &\quad H \in \mathcal M_{2n}^{sa}(\mathbb R) \\
\nonumber &\quad H \geq \ii J_{2n},
\end{align}
where $\mathcal{M}^{sa}_{2n}(\mathbb{R})$ denotes the set of $2n \times 2n$ real and self-adjoint, i.e., symmetric matrices.

The geometry of the optimization problem above depends on the norm used as a cost function. We consider here the operator norm $\| \cdot \|_\infty$. The optimization problem \eqref{eq:closest-covariance} becomes an SDP: 
\begin{align*}
\min & \quad t \\
\text{s.t.} &\quad H \in \mathcal M_{2n}^{sa}(\mathbb R), t \in \mathbb R \\
&\quad H \geq \ii J_{2n}\\
&\quad -t I_{2n} \leq G-H \leq t I_{2n}.
\end{align*}

The dual SDP can be easily computed: 
\begin{align}
\nonumber\max & \quad \langle Z, iJ_{2n}-G \rangle \\
\label{eq:SDP-infty-dual}\text{s.t.} &\quad Z \in \mathcal M_{2n}^{sa}(\mathbb C)\\
\nonumber&\quad Z \geq 0\\
\nonumber&\quad \|\operatorname{Re} Z \|_1 \leq 1.
\end{align}

However, since $Z \geq 0$, we also have $\operatorname{Re} Z \geq 0$ and thus the last condition above reads simply $\operatorname{Tr} Z \leq 1$, yielding the explicit solution 
\begin{equation}\label{eq:solutiuon-SDP}
    \lambda(G) = \max(0, \lambda_{\max}(\ii J_{2n}-G)).
\end{equation}
In particular, we see that in the case when $G \geq \ii J_{2n}$, one can simply take $H=G$, obtaining the optimum 0. 

Importantly, there is a trivial solution $H = G + \lambda(G) I_{2n}$ achieving the value above. Indeed, write the decomposition of the matrix $G-\ii J_{2n}$ into its positive and negative parts:
$$G- \ii J_{2n} = A-B,$$
with $A, B \geq0$ having orthogonal supports. Clearly, $\lambda(G) = \lambda_{\max}(B) \geq 0$, and we have 
$$G + \lambda_{\max}(B)I_{2n} = \ii J_{2n} + A  +(\lambda_{max}(B)I_{2n}-B) \geq \ii J_{2n},$$
proving that $G + \lambda_{\max}(B)I_{2n}$ achieves the optimum value in the primal program. The geomtry of the problem and the closest matrix $H$ described above are presented graphically in Figure \ref{fig:closest-QCM}. 

\begin{figure}
    \centering
    \includegraphics{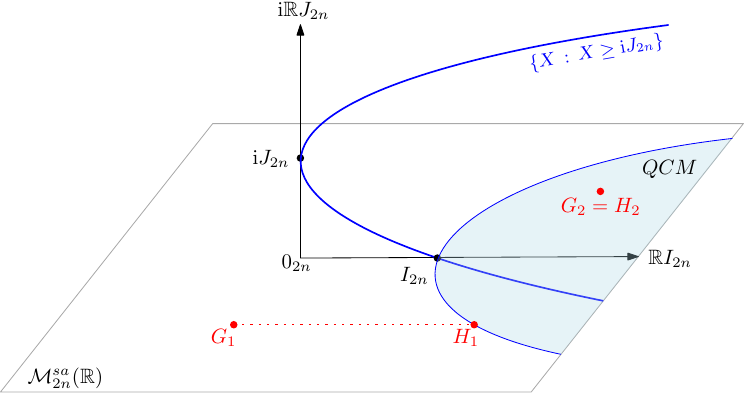}
    \caption{Finding the closest (in operator norm) quantum covariance matrix $H$ to a given real symmetric matrix $G$. We display two examples: $G_1$ is not a quantum covariance matrix, so it needs to be shifted by a multiple of the identity matrix to become one ($H_1$); $G_2$ is already a quantum covariance matrix, so $H_2=G_2$.}
    \label{fig:closest-QCM}
\end{figure}

\subsection{Basic properties of GOE matrices}

The \emph{Gaussian orthogonal ensemble} $\GOE$ is arguably the most studied ensemble of random matrices. It consists of real symmetric matrices distributed along the Gaussian distribution on the corresponding vector space. 

\begin{defin}
    A random symmetric matrix $G \in \mathcal M^{sa}_n(\mathbb R)$ is said to have a $\GOE(n, \sigma)$ distribution if:
    \begin{itemize}
        \item the random variables $\{G_{ij}\}_{1 \leq i \leq j \leq n}$ are independent;
        \item the diagonal entries have distribution
        $$\forall i \in [n], \qquad G_{ii} \sim \mathcal N(0, 2\sigma^2)$$
        \item the off-diagonal entries have distribution
        $$\forall i<j \in [n], \qquad G_{ij} = G_{ji} \sim \mathcal N(0, \sigma^2).$$
    \end{itemize}
\end{defin}

On the space of real symmetric matrices, random $\GOE(n, \sigma=1)$ matrices have distribution given by \cite[Section 2.5.1]{anderson2010introduction}
$$\frac{\mathrm{d}G}{\mathrm{d} \mathrm{Leb}} = 2^{-n/2} (2\pi)^{-n(n+1)/4} \exp\big( - \operatorname{Tr}(G^2)/4 \big),$$
where $\mathrm{Leb}$ is the Lebesgue measure on the vector space of $n \times n$ real symmetric matrices. 

Wigner famously showed that the empirical eigenvalue distribution 
$$\mu_G:= \frac 1 n \sum_{i=1}^n \delta_{\lambda_i(G)}$$
converges to the (centered) \emph{semicircle distribution} defined by
$$\mathrm{d} \mathrm{SC}_\sigma = \frac{\sqrt{4\sigma^2-x^2}}{2 \pi \sigma^2}  \mathbf{1}_{|x| \leq 2\sigma}(x) \mathrm{d}x.$$

\begin{theorem}{\cite{wigner1955characteristic}}\label{thm:GOE-limit}
    Let $\sigma >0$ and $G_n \sim \GOE(n, \sigma)$ be a sequence of $\GOE$ random matrices. Then, almost surely,
    $$\lim_{n \to \infty} \mu_{G_n / \sqrt n} = \mathrm{SC}_\sigma,$$
    where the convergence of probability measures is considered in the weak sense. 
\end{theorem}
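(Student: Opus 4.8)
The plan is to establish this classical result by the \emph{method of moments}, which is self-contained and well adapted to the Gaussian structure of the ensemble. Since the semicircle law $\mathrm{SC}_\sigma$ is compactly supported on $[-2\sigma, 2\sigma]$, it is uniquely determined by its moments; hence it suffices to prove that for every fixed integer $k \geq 0$ the $k$-th moment of the empirical spectral distribution $\mu_{G_n/\sqrt n}$ converges almost surely to the $k$-th moment of $\mathrm{SC}_\sigma$, and then to invoke the standard passage from moment convergence to weak convergence (a Carleman-type uniqueness statement together with a tightness argument). I would first handle convergence in expectation and then upgrade to almost sure convergence.

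First I would write the moment as a normalized trace and take expectations,
$$\mathbb E \int x^k \, \mathrm{d}\mu_{G_n/\sqrt n}(x) = \frac{1}{n^{1+k/2}} \sum_{i_1, \dots, i_k = 1}^n \mathbb E\big[G_{i_1 i_2} G_{i_2 i_3} \cdots G_{i_k i_1}\big].$$
Because the entries are centered and jointly Gaussian, Wick's (Isserlis') theorem expands each expectation as a sum over pair partitions of the $k$ factors, where every pair contributes a covariance $\mathbb E[G_{ab}G_{cd}] = \sigma^2(\delta_{ac}\delta_{bd} + \delta_{ad}\delta_{bc})$; note this single formula already encodes the doubled variance on the diagonal. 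Each index tuple $(i_1, \dots, i_k)$ is then read as a closed walk on the vertex set $[n]$, and the task reduces to counting, for each pairing, how many free index choices it permits against the normalization $n^{1+k/2}$.

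The heart of the argument is this combinatorial bookkeeping. The standard Wigner analysis shows that only \emph{non-crossing} pair partitions of an even number $k = 2p$ of edges survive in the limit: these correspond to closed walks that trace out a plane tree with each edge traversed exactly twice, freeing the maximal number $p+1$ of distinct indices, which exactly balances the normalization. Odd moments and all crossing or index-collapsing contributions are $O(n^{-1})$ and vanish. The surviving count is the Catalan number $C_p = \tfrac{1}{p+1}\binom{2p}{p}$, giving
$$\lim_{n \to \infty} \mathbb E \int x^{2p} \, \mathrm{d}\mu_{G_n/\sqrt n}(x) = \sigma^{2p} C_p, \qquad \lim_{n \to \infty} \mathbb E \int x^{2p+1} \, \mathrm{d}\mu_{G_n/\sqrt n}(x) = 0,$$
which match precisely the moments of $\mathrm{SC}_\sigma$.

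To upgrade convergence in expectation to almost sure convergence, I would bound the variance of each moment and show $\operatorname{Var}\big(\tfrac1n \tr (G_n/\sqrt n)^k\big) = O(n^{-2})$; this follows from a second Wick expansion, now over pairings of the $2k$ factors arising from two copies of the trace, in which the connected pairings that generate the variance are suppressed by an extra factor of $n^{-2}$. Summability of these variances over $n$ then yields almost sure convergence of every moment via Chebyshev's inequality and the Borel--Cantelli lemma, and combining this with the moment-determinacy of $\mathrm{SC}_\sigma$ gives the claimed almost sure weak convergence. I expect the main obstacle to be the leading-order combinatorial count of the third paragraph --- precisely matching the power of $n$ liberated by each pairing so as to isolate the non-crossing partitions and the Catalan numbers --- rather than the routine, if tedious, variance estimate.
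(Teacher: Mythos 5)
The paper gives no proof of this statement at all --- it is quoted as Wigner's classical theorem, with the proof deferred to \cite{wigner1955characteristic} (and the textbook treatment in \cite{anderson2010introduction}). Your moment-method argument --- Wick expansion of expected traces, the non-crossing/tree count yielding Catalan numbers $\sigma^{2p}C_p$, an $O(n^{-2})$ variance bound, and Borel--Cantelli to upgrade to almost sure convergence --- is correct and is essentially the canonical proof found in those cited sources, so there is nothing to compare beyond noting that you have reconstructed the standard argument the paper relies on.
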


We plot in Figure \ref{fig:GOE-histogram} the empirical histogram of eigenvalues versus the theoretical curve of the semicircular distribution. Note also that the largest and the smallest eigenvalues of a $\GOE$ matrix also converge to the edges of the support of the semicircular distribution. 

\begin{figure}
    \centering
    \includegraphics[width=.47\textwidth]{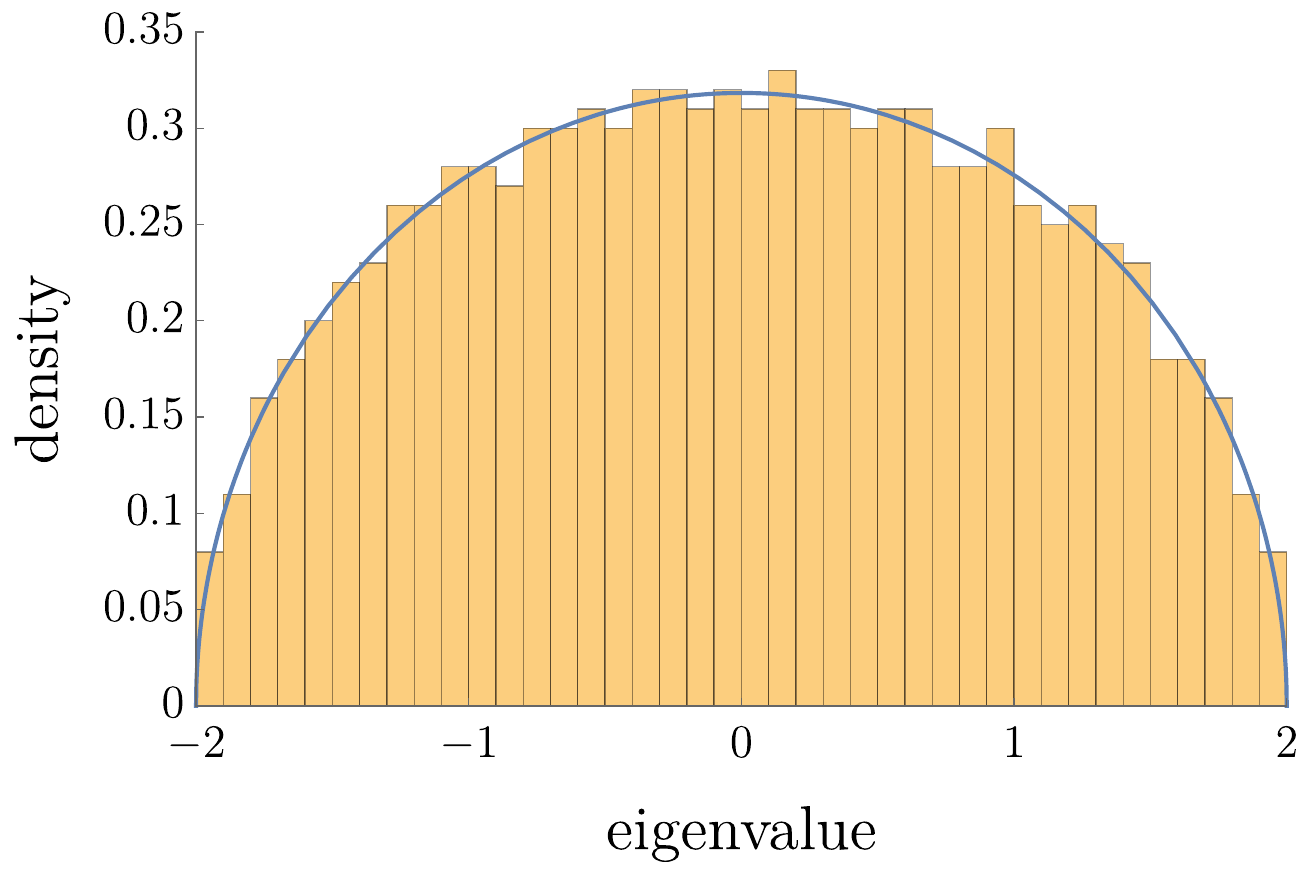} \qquad 
    \includegraphics[width=.47\textwidth]{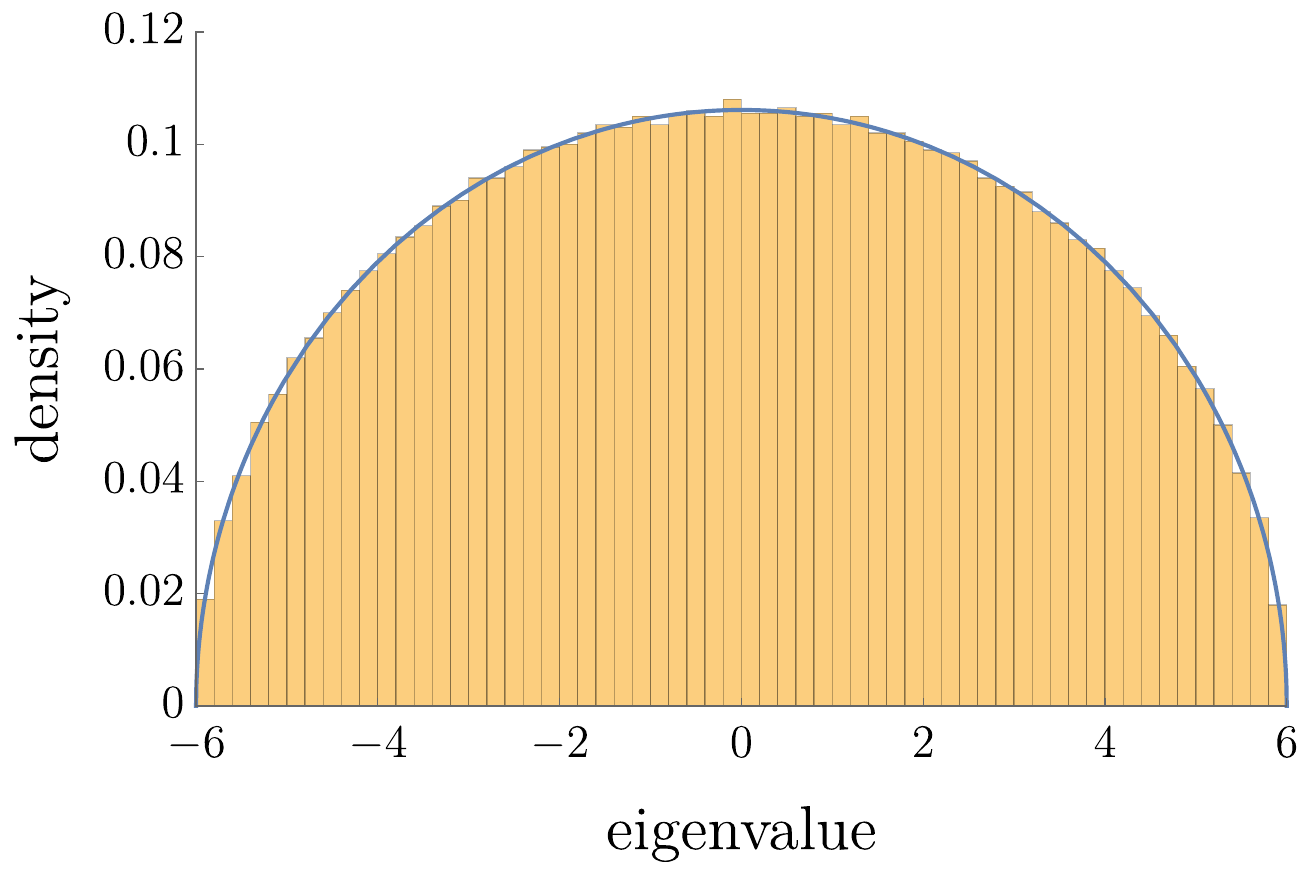}
    \caption{Theoretical curve (blue) and histogram of eigenvalues of one realization of $\GOE$ matrices (yellow). Left: $n=500$, $\sigma = 1$; right: $n=5000$, $\sigma = 3$.}
    \label{fig:GOE-histogram}
\end{figure}

\begin{theorem}{\cite{furedi1981eigenvalues}}
    Let $\sigma >0$ and $G_n \sim \GOE(n, \sigma)$ be a sequence of $\GOE$ random matrices. Then, almost surely,
    $$\lim_{n \to \infty} \lambda_{\min}(G_n / \sqrt n) = -2 \sigma \quad \text{ and } \quad \lim_{n \to \infty} \lambda_{\max}(G_n / \sqrt n) = 2 \sigma.$$
\end{theorem}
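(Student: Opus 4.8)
The plan is to prove the two limits separately, and for each extreme eigenvalue to establish a matching lower and upper bound. The lower bound on $\lambda_{\max}(G_n/\sqrt n)$ (and, by the distributional symmetry $G_n \sim -G_n$ of the GOE, the corresponding bound on $\lambda_{\min}(G_n/\sqrt n)$) comes essentially for free from Wigner's theorem, \cref{thm:GOE-limit}; the real content of the statement is the \emph{absence of outliers}, i.e. the matching upper bound $\limsup_n \lambda_{\max}(G_n/\sqrt n) \le 2\sigma$, which I would obtain by the method of moments. Write $W_n := G_n/\sqrt n$ throughout.

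For the easy direction, fix $\epsilon > 0$. Since $\mu_{W_n} \to \mathrm{SC}_\sigma$ weakly and almost surely by \cref{thm:GOE-limit}, and since the semicircle density is strictly positive on $(-2\sigma, 2\sigma)$, the mass $c_\epsilon := \mathrm{SC}_\sigma([2\sigma - \epsilon, 2\sigma])$ is strictly positive; as $\mathrm{SC}_\sigma$ has a continuous density, weak convergence forces $\mu_{W_n}([2\sigma-\epsilon, 2\sigma]) \to c_\epsilon$, so this fraction exceeds $c_\epsilon/2 > 0$ for all $n$ large enough (almost surely), and in particular at least one eigenvalue exceeds $2\sigma - \epsilon$. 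Hence $\liminf_n \lambda_{\max}(W_n) \ge 2\sigma - \epsilon$, and letting $\epsilon \to 0$ along a countable sequence gives $\liminf_n \lambda_{\max}(W_n) \ge 2\sigma$ almost surely.

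For the upper bound I would estimate high even moments of the spectrum. Using $\lambda_{\max}(W_n)^{2k} \le \tr(W_n^{2k}) = \sum_i \lambda_i(W_n)^{2k}$, the key input is a bound of the form
$$\mathbb E\,\tr(W_n^{2k}) \le n\, \sigma^{2k}\, C_k\,(1+o(1)),$$
where $C_k$ is the $k$-th Catalan number, obtained by expanding the trace as a sum over closed walks of length $2k$ on the index set $\{1, \dots, n\}$, discarding via independence and the mean-zero property all walks in which some edge is traversed an odd number of times, and identifying the dominant contribution as the doubled-tree walks enumerated by $C_k$. Since $C_k \le 4^k$ and $C_k^{1/(2k)} \to 2$, Markov's inequality gives, for any $\epsilon > 0$,
$$\mathbb P\big(\lambda_{\max}(W_n) > 2\sigma + \epsilon\big) \le \frac{\mathbb E\,\tr(W_n^{2k})}{(2\sigma+\epsilon)^{2k}} \le C\, n\, \Big(\frac{2\sigma}{2\sigma+\epsilon}\Big)^{2k}.$$
Choosing $k = k(n) \to \infty$ (for instance proportional to $\log n$ with a large enough constant) makes the right-hand side summable in $n$, so the Borel--Cantelli lemma yields $\limsup_n \lambda_{\max}(W_n) \le 2\sigma + \epsilon$ almost surely; letting $\epsilon \to 0$ along a countable sequence completes the bound. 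The claim for $\lambda_{\min}$ then follows from $\lambda_{\min}(W_n) \stackrel{d}{=} -\lambda_{\max}(W_n)$.

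The main obstacle is the combinatorial heart of the moment computation: one must count the closed walks as a function of both $n$ and $k$ precisely enough that, after normalization, the non-tree walks are shown to be of lower order, \emph{and} the $k$-dependence of the error term $(1+o(1))$ is controlled uniformly as $k = k(n) \to \infty$ rather than merely for fixed $k$. This is exactly the delicate bookkeeping carried out in \cite{furedi1981eigenvalues}. An alternative route that sidesteps the need to optimize $k(n)$ by hand is to invoke Gaussian concentration: $\lambda_{\max}$ is a $1$-Lipschitz function of the matrix entries, so the Gaussian concentration inequality provides exponentially small deviation probabilities around the median, which are automatically Borel--Cantelli summable and thus upgrade convergence in expectation to the almost sure statement.
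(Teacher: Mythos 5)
The paper itself offers no proof of this statement: it is quoted directly from F\"uredi--Koml\'os \cite{furedi1981eigenvalues}, so the only meaningful comparison is with that cited source---and your sketch is precisely its route (lower bound on $\liminf\lambda_{\max}$ from Wigner's theorem \cref{thm:GOE-limit}, upper bound via $\mathbb{E}\,\tr(W_n^{2k})$ with $k\asymp\log n$, Markov plus Borel--Cantelli, and symmetry for $\lambda_{\min}$). Your outline is correct, including the symmetry step: since all almost-sure conclusions are extracted by Borel--Cantelli from purely distributional tail bounds, the equality in law $\lambda_{\min}(W_n)\overset{d}{=}-\lambda_{\max}(W_n)$ does suffice. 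Two caveats are worth recording. First, the combinatorial lemma you defer to \cite{furedi1981eigenvalues} is proved there for entries bounded by a constant $K$, whereas $\GOE$ entries are Gaussian; to invoke it literally you need a truncation step (say at level $\sqrt{C\log n}$, the exceptional event being Borel--Cantelli negligible), or else you must redo the walk-counting while tracking the growth $(2j-1)!!$ of Gaussian moments on edges traversed $2j\geq 4$ times, uniformly in $k=k(n)\to\infty$. This is routine but it is an unflagged step, and it is exactly where the ``delicate bookkeeping'' you mention lives. Second, your proposed shortcut via Gaussian concentration does not, as stated, sidestep the combinatorics: concentration of the Lipschitz functional $\lambda_{\max}$ only localizes it near its mean or median, and identifying that mean as $2\sigma+o(1)$ still requires the moment bound with $k\gtrsim\log n$ (the factor $(Cn)^{1/(2k)}$ in $\bigl(\mathbb{E}\,\tr W_n^{2k}\bigr)^{1/(2k)}$ is only $1+o(1)$ when $k/\log n\to\infty$). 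If you want a genuinely combinatorics-free argument, pair the concentration inequality with a Gaussian comparison bound: Sudakov--Fernique applied to the process $u\mapsto u^{\top}G_n u$ on the unit sphere gives $\mathbb{E}\,\lambda_{\max}(G_n)\leq 2\sigma\sqrt{n}$ directly, and together with concentration and the Wigner lower bound this yields the full theorem.
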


\subsection{Random Quantum Covariance Matrices --- RQCM}

As motivated at the beginning of the section, we shall define the ensemble of random quantum covariance matrices associating to an element $G$ of the GOE ensemble the closest quantum covariance matrix, see \cref{eq:closest-covariance}. Note however that if $G$ is already a quantum covariance matrix, we still shift $G$ by a multiple of the identity. 

\begin{defin}\label{def:RQCM-ensemble}
    For a given $n \geq 1$, let $G \in \GOE(2n, \sigma)$ be an element of the Gaussian Orthogonal Ensemble with mean $0$ and variance $\sigma^2$. A \emph{random quantum covariance matrix} of parameter $\sigma$ is a matrix
    $$S_G:=G + \lambda_{\max}(\ii J_{2n} - G) \cdot I_{2n}.$$
    We denote by $\RQCM(2n, \sigma)$ the ensemble of such random matrices.
\end{defin}

Clearly, every element of $\RQCM(2n, \sigma)$ is a Gaussian covariance matrix: 
$$\lambda_{\min}(S_G-\ii J_{2n}) = \lambda_{\min}(G -\ii J_{2n}) + \lambda_{\max}(\ii J_{2n} - G) = 0. $$

The $\RQCM$ ensemble has $\K(2n) = \O(2n) \cap \Sp(2n)$ symmetry, as it is shown in the following proposition. 

\begin{prop}\label{prop:ortho-symplectic-invariance}
For any real matrix $G \in \mathcal M_{2n}(\mathbb R)$ and any operator $U \in \K(2n)$, we have 
$$S_{UGU^\top} = U S_G U^\top.$$
In particular, if $S$ is a random matrix having $\RQCM(2n, \sigma)$ distribution, then so does $UGU^\top$. In other words, the ensemble $\RQCM(2n, \sigma)$ is invariant with respect to the ortho-symplectic group.
\end{prop}
\begin{proof}
    Compute
    \begin{align*}
    S_{UGU^\top} &= UGU^\top + \lambda_{\max}(\ii J_{2n} - UGU^\top)I_{2n}\\
    &= UGU^\top + \lambda_{\max}\bigg(U(\ii \underbrace{U^\top J_{2n}U}_{J_{2n}} - G)U^\top\bigg)I_{2n}\\
    &= UGU^\top + \lambda_{\max}(\ii J_{2n} - G)I_{2n}\\
    &= U\bigg(G+ \lambda_{\max}(\ii J_{2n} - G)I_{2n} \bigg)U^\top \\
    &= U S_G U^\top.
    \end{align*}

    For the second claim, if $S$ has $\RQCM(2n, \sigma)$ distribution, then $S=S_G$ for a random matrix $G$ having $\GOE(2n, \sigma)$ distribution. It follows that $UGU^\top$ also has $\GOE(2n, \sigma)$ distribution, and thus 
    $USU^\top = U S_G U^\top = S_{UGU^\top}$ has $\RQCM(2n, \sigma)$ distribution.
\end{proof}

Let us consider now the partial trace operator, where, given a Gaussian covariance matrix $S \in \mathcal M_{2n}(\mathbb R)$ corresponding to a $n$-mode Gaussian state, we associate its $m$-mode marginal $S^{(2m)}$, obtained as its top-left $2m \times 2m$ block, see Eq.~\eqref{eq:m+n-modes}. The following result shows that the $\RQCM$ ensemble is, up to translations, stable under taking marginals. 

\begin{prop}\label{prop:marginal-RQCM}
    Let $S$ be a random quantum covariance matrix having $\RQCM(2n, \sigma)$ distribution, and let $1 \leq m <n$. Then, its $m$-mode marginal has a \emph{shifted} $\RQCM(2m, \sigma)$ distribution.
\end{prop}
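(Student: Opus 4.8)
The plan is to compute the $m$-mode marginal $S^{(2m)}$ directly from the definition of $S_G$ and to recognize it as an element of $\RQCM(2m,\sigma)$ displaced by a scalar multiple of the identity. Since $S_G = G + \lambda_{\max}(\ii J_{2n} - G)\,I_{2n}$ and the shift term is already proportional to the identity, extracting the top-left $2m \times 2m$ block gives at once
$$S^{(2m)} = G^{(2m)} + \lambda_{\max}(\ii J_{2n} - G)\, I_{2m},$$
where $G^{(2m)}$ denotes the top-left block of $G$. Two structural observations drive the argument. First, because $J_{2n} = \big[\begin{smallmatrix} 0 & 1 \\ -1 & 0 \end{smallmatrix}\big]^{\oplus n}$ is block diagonal, its top-left $2m \times 2m$ block is exactly $J_{2m}$; consequently the top-left block of $\ii J_{2n}-G$ equals $\ii J_{2m}-G^{(2m)}$. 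Second, a principal submatrix of a GOE matrix is again a GOE matrix of the appropriate size: the entries $\{G_{ij}\}_{1 \le i \le j \le 2m}$ of $G^{(2m)}$ retain their independence (up to symmetry) and their prescribed variances, so $G^{(2m)} \sim \GOE(2m,\sigma)$.

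With these facts in hand, I would add and subtract $\lambda_{\max}(\ii J_{2m} - G^{(2m)})\,I_{2m}$ and regroup:
$$S^{(2m)} = \underbrace{\Big(G^{(2m)} + \lambda_{\max}(\ii J_{2m} - G^{(2m)})\, I_{2m}\Big)}_{=:\, S_{G^{(2m)}}} + c\, I_{2m}, \qquad c := \lambda_{\max}(\ii J_{2n} - G) - \lambda_{\max}(\ii J_{2m} - G^{(2m)}).$$
By \cref{def:RQCM-ensemble} together with the second observation above, $S_{G^{(2m)}}$ has exactly the $\RQCM(2m,\sigma)$ distribution, so $S^{(2m)}$ is an $\RQCM(2m,\sigma)$ matrix shifted by the scalar $c$, which is what the statement asserts. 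As a consistency check, Cauchy interlacing applied to the Hermitian matrix $\ii J_{2n}-G$ and its principal submatrix $\ii J_{2m}-G^{(2m)}$ yields $\lambda_{\max}(\ii J_{2n}-G) \ge \lambda_{\max}(\ii J_{2m}-G^{(2m)})$, hence $c \ge 0$; this matches the expectation that the marginal, being itself a quantum covariance matrix, already satisfies $S^{(2m)} \ge \ii J_{2m}$ and is merely over-shifted relative to the minimal amount encoded in $S_{G^{(2m)}}$.

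The computation is short, so the main difficulty is interpretive rather than technical. The phrase \emph{shifted $\RQCM(2m,\sigma)$ distribution} must be read as the existence of a decomposition $S^{(2m)} = S' + c\,I_{2m}$ with $S' \sim \RQCM(2m,\sigma)$, and one should be careful to note that the scalar $c$ is itself random and statistically correlated with the $\RQCM$ part $S'$: both depend on $G$, and $c$ depends on the full matrix rather than on its top-left block alone. The one place demanding genuine attention is the identification of the top-left block of $J_{2n}$ with $J_{2m}$, which relies on the block-diagonal convention $\big[\begin{smallmatrix} 0 & 1 \\ -1 & 0 \end{smallmatrix}\big]^{\oplus n}$ rather than the tensor-product form $\big[\begin{smallmatrix} 0 & 1 \\ -1 & 0 \end{smallmatrix}\big] \otimes I_n$, under which the first $m$ modes would not occupy a contiguous principal block and the clean marginalization would fail.
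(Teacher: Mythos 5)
Your proof is correct and follows essentially the same route as the paper: both extract the top-left block of $S_G$, regroup it as $S_{G^{(2m)}}$ plus the shift $\lambda_{\max}(\ii J_{2n}-G)-\lambda_{\max}(\ii J_{2m}-G^{(2m)}) \geq 0$, and invoke the fact that the $2m\times 2m$ top-left corner of a $\GOE(2n,\sigma)$ matrix is $\GOE(2m,\sigma)$. Your supplementary remarks (Cauchy interlacing to justify the sign of the shift, the block-diagonal structure of $J_{2n}$ making the corner of $\ii J_{2n}-G$ equal to $\ii J_{2m}-G^{(2m)}$, and the observation that the random shift is correlated with the $\RQCM$ part) only make explicit details the paper leaves implicit.
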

\begin{proof}
    Writing $S = S_G$ for $G \sim \GOE(2n, \sigma)$, we have 
    $$(S_G)^{(2m)} = G^{(2m)} + \lambda_{\max}(\ii J_{2n} - G)I_{2m} = S_{G^{(2m)}} + \delta I_{2m},$$
    where 
    $$\delta := \lambda_{\max}(\ii J_{2n} - G) - \lambda_{\max}\big( (\ii J_{2n} - G)^{(2m)} \big) \geq 0.$$
    Since the $2m \times 2m$ top-left corner of a $\GOE(2n, \sigma)$ is a $\GOE(2m, \sigma)$ matrix, the claim follows. 
\end{proof}

\section{The large number of modes limit of random quantum covariance matrices}\label{sec:large-RQCM}

We discuss in this section behavior of the $\RQCM$ ensemble in the large dimension limit, i.e.~in the limit where the number of modes $n$ goes to infinity. 

The first question we shall address is the large $n$ behavior of the quantity 
$$\lambda_{\max}(\ii J_{2n}-G)$$
from Definition \ref{def:RQCM-ensemble}. This quantity is the \emph{minimal} amount by which one needs to shift a $\GOE$ element in order to obtain a quantum covariance matrix. In other words, this is the minimal shift such that the smallest \emph{symplectic} eigenvalue of the resulting matrix is 1. 

\begin{prop}\label{prop:convergence-shifted-GOE}
Let $G_{2n} \in \GOE(2n, \sigma)$ an element from the GOE ensemble having variance $\sigma$. Then, almost surely, the random matrix 
$$\ii J_{2n} - \frac{G_{2n}}{\sqrt{2n}}$$
converges strongly, as $n \to \infty$ to the measure $\mu_\sigma$ given by
\begin{equation}
    \mu_\sigma := \Big( \frac 1 2 \delta_{-1} + \frac 1 2 \delta_1 \Big) \boxplus \mathrm{SC}_{\sigma},
\end{equation}
 where $\boxplus$ denotes Voiculescu's \emph{free additive convolution}. The measure $\mu_\sigma$ has support
 $$\operatorname{supp}(\mu_\sigma) = \begin{cases}
        [-R(\sigma), -L(\sigma)] \sqcup  [L(\sigma), R(\sigma)]&\qquad \text{ if } \sigma < 1\\
        [-R(\sigma), R(\sigma)] &\qquad \text{ if } \sigma \geq 1,
 \end{cases}$$
 with 
 \begin{align}
\label{eq:def-R}    R(\sigma) &:= \Big(1 + \frac \sigma 4 (\sqrt{8 + \sigma^2}-\sigma) \Big)\sqrt{1 + \frac \sigma 2 (\sqrt{8 + \sigma^2}+\sigma)} \qquad \forall \, \sigma > 0\\
\label{eq:def-L}    L(\sigma) &:=\Big(1 - \frac \sigma 4 (\sqrt{8 + \sigma^2}+\sigma) \Big)\sqrt{1 - \frac \sigma 2 (\sqrt{8 + \sigma^2}-\sigma)}\qquad \forall \, \sigma \in (0,1). 
 \end{align}
In particular, we have, almost surely,
$$\lim_{n \to \infty} \lambda_{\max}\Big(\ii J_{2n} - \frac{G_{2n}}{\sqrt{2n}}\Big) = R(\sigma).$$
\end{prop}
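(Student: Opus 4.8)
The plan is to compute the limiting spectral distribution of $\ii J_{2n} - G_{2n}/\sqrt{2n}$ by recognizing it as a free additive convolution, and then to extract the edge of its support. The key structural observation is that the matrix $\ii J_{2n}$ and the (rescaled) GOE matrix become asymptotically free, so that the limiting distribution factors through $\boxplus$.

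First I would identify the limiting distribution of each summand separately. The matrix $\ii J_{2n}$ is deterministic: since $J_{2n}$ is real antisymmetric with $J_{2n}^2 = -I_{2n}$, the matrix $\ii J_{2n}$ is Hermitian with eigenvalues $\pm 1$, each with multiplicity $n$. Hence its empirical spectral distribution is exactly $\frac12 \delta_{-1} + \frac12 \delta_{1}$ for every $n$. For the second summand, Wigner's theorem (\cref{thm:GOE-limit}) gives that $G_{2n}/\sqrt{2n}$ converges to the semicircle law $\mathrm{SC}_\sigma$. The crucial step is then asymptotic freeness: because the GOE ensemble is invariant under orthogonal conjugation and $\ii J_{2n}$ is a fixed sequence of bounded-norm matrices, the pair $(\ii J_{2n}, G_{2n}/\sqrt{2n})$ is asymptotically free almost surely (this is the standard asymptotic freeness of a Wigner matrix from a deterministic matrix, available from the references \cite{voiculescu1992free,mingo2017free}). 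Consequently the empirical spectral distribution of the sum converges almost surely to $(\frac12\delta_{-1}+\frac12\delta_1)\boxplus \mathrm{SC}_\sigma = \mu_\sigma$, and the convergence is \emph{strong} (i.e.\ also in operator norm, so that extreme eigenvalues converge to the edges of the support) by the strong asymptotic freeness results for Wigner plus deterministic matrices.

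To obtain the explicit support, I would work with the subordination/$R$-transform machinery for $\boxplus$. The $R$-transform of $\mathrm{SC}_\sigma$ is $R_{\mathrm{SC}}(z) = \sigma^2 z$, so the Cauchy transform $G_{\mu_\sigma}$ satisfies a functional equation obtained by adding $R$-transforms: writing $\nu = \frac12\delta_{-1}+\frac12\delta_1$ with Cauchy transform $G_\nu(w) = \frac{w}{w^2-1}$, subordination yields $G_{\mu_\sigma}(z) = G_\nu\big(z - \sigma^2 G_{\mu_\sigma}(z)\big)$. Setting $g = G_{\mu_\sigma}(z)$ and $\omega = z - \sigma^2 g$, this gives the polynomial relation $g(\omega^2 - 1) = \omega$, i.e.\ a cubic in $g$ once $\omega$ is substituted. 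The edges of the support are the images under $z = \omega - \sigma^2 G_\nu(\omega) \cdot(\text{appropriate branch})$ of the critical points where the inverse Cauchy transform $z(g)$ has vanishing derivative; equivalently, I would parametrize $z$ as a function of $\omega$ via $z = \omega + \sigma^2 G_\nu(\omega)$ and solve $\frac{dz}{d\omega} = 0$, i.e.\ $1 + \sigma^2 G_\nu'(\omega) = 0$. Since $G_\nu'(\omega) = -\frac{\omega^2+1}{(\omega^2-1)^2}$, the critical-point condition becomes a quartic in $\omega^2$; solving it and pushing the critical $\omega$-values back through $z(\omega)$ should produce the stated expressions $R(\sigma)$ and $L(\sigma)$, with the two-interval versus one-interval dichotomy governed by whether the inner edge $L(\sigma)$ is real, which happens precisely when $\sigma < 1$.

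The main obstacle is the explicit algebra in this last step: carefully selecting the correct branch/critical point that corresponds to the outer right edge (yielding $R(\sigma)$) rather than the inner edge, verifying that the radicals appearing in \eqref{eq:def-R} and \eqref{eq:def-L} arise from solving $1 + \sigma^2 G_\nu'(\omega)=0$, and checking that $L(\sigma)$ ceases to be real exactly at $\sigma = 1$ (so that the gap closes and the support becomes a single interval). Once the support is pinned down, the final claim $\lim_n \lambda_{\max}(\ii J_{2n} - G_{2n}/\sqrt{2n}) = R(\sigma)$ follows immediately from strong convergence, since the largest eigenvalue converges to the right endpoint of $\operatorname{supp}(\mu_\sigma)$, which is $R(\sigma)$ in both regimes.
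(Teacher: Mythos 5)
Your proposal is correct and follows essentially the same route as the paper: identify the limits of the two summands (Bernoulli for $\ii J_{2n}$, semicircle for the rescaled GOE), invoke asymptotic freeness of the GOE from the deterministic matrix to get $\mu_\sigma = \big(\tfrac12\delta_{-1}+\tfrac12\delta_1\big)\boxplus\mathrm{SC}_\sigma$, appeal to strong convergence results (Male, Collins--Male) so that $\lambda_{\max}$ converges to the right edge of the support, and compute that edge via the subordination map. Your critical-point analysis --- solving $1+\sigma^2 G_\nu'(\omega)=0$ and evaluating $z(\omega)=\omega+\sigma^2 G_\nu(\omega)$ there --- is exactly the content of the theorem of Capitaine et al.\ that the paper cites as a black box: their $H_\sigma$ is your $z(\omega)$ and the boundary of their set $U_\sigma$ is your critical-point condition, so the two support computations coincide (your only slip, that the condition is a quartic in $\omega^2$, is immaterial; it is a quadratic in $\omega^2$, whose smaller root is real precisely for $\sigma<1$, giving the stated phase transition).
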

\begin{proof}
    We shall first compute the asymptotic limit of the empirical eigenvalue distribution of the random matrix 
    $$X_n := \ii J_{2n} - \frac{G_{2n}}{\sqrt{2n}}.$$
    From Theorem \ref{thm:GOE-limit}, we know that the asymptotic limit of the second term $G_n/\sqrt{2n}$ is the semicircular distribution $\mathrm{SC}_\sigma$. The first term, $\ii J_{2n}$ converges to the \emph{Bernoulli distribution}
    \begin{equation}\label{eq:Bernoulli}
        \ii J_{2n} \to \frac 1 2 \delta_{-1} + \frac 1 2 \delta_1.
    \end{equation}
    Hence, by Voiculescu's theorem about sums of unitarily invariant random matrices (see, e.g., \cite[Theorem 22.35]{nica2006lectures}), 
    $$X_n \to \mu_\sigma = \Big( \frac 1 2 \delta_{-1} + \frac 1 2 \delta_1 \Big) \boxplus \mathrm{SC}_{\sigma},$$
  where we have used implicitly the symmetry of the semicircular distribution. The study of the free additive convolutions of semicircular distributions was initiated by Biane \cite{biane1997free} using the \emph{subordination property}. Here, we use the (equivalent) statements from \cite[Theorem 2.1]{capitaine2011free}, with $\nu$ being the Bernoulli distribution from Eq.~\eqref{eq:Bernoulli}: the complementary support of the measure $( \frac 1 2 \delta_{-1} + \frac 1 2 \delta_1 ) \boxplus \mathrm{SC}_{\sigma}$ is given by
    $$H_\sigma(\mathbb R \setminus U_\sigma),$$
    where $H_\sigma$ is the function 
    $$H_\sigma(z) := z + \frac{\sigma^2 z}{z^2-1}$$
    and $U_\sigma$ is the set
    $$U_\sigma:=\Big\{ u \in \mathbb R \, : \, \frac{u^2+1}{(u-1)^2(u+1)^2} > \frac{1}{\sigma^2}\Big\}.$$
    Solving for $u$ in the equation above and plugging the values in the formula for $H_\sigma$ yield the values for $L,R$ from the statement. The strong convergence of the random matrix towards the limit element follows from general results about the almost sure strong convergence of Gaussian and deterministic matrices \cite{male2012norm,collins2014strong}. The almost sure strong convergence implies in turn the almost sure convergence of the largest eigenvalue of $\ii J_{2n} - G_{2n} / \sqrt{2n}$ towards the supremum of the support of the limiting measure $\mu_\sigma$, i.e.~$R(\sigma)$.
\end{proof}

We plot the support of the measure $\mu_\sigma$, as a function of $\sigma$, in Figure \ref{fig:support}. Note that at $\sigma=1$ the is a \emph{phase transition} between a measure supported on two intervals (for $\sigma < 1$) and a measure supported on one interval ($\sigma \geq 1$). 

\begin{figure}[htb]
    \centering
    \includegraphics[scale=0.45]{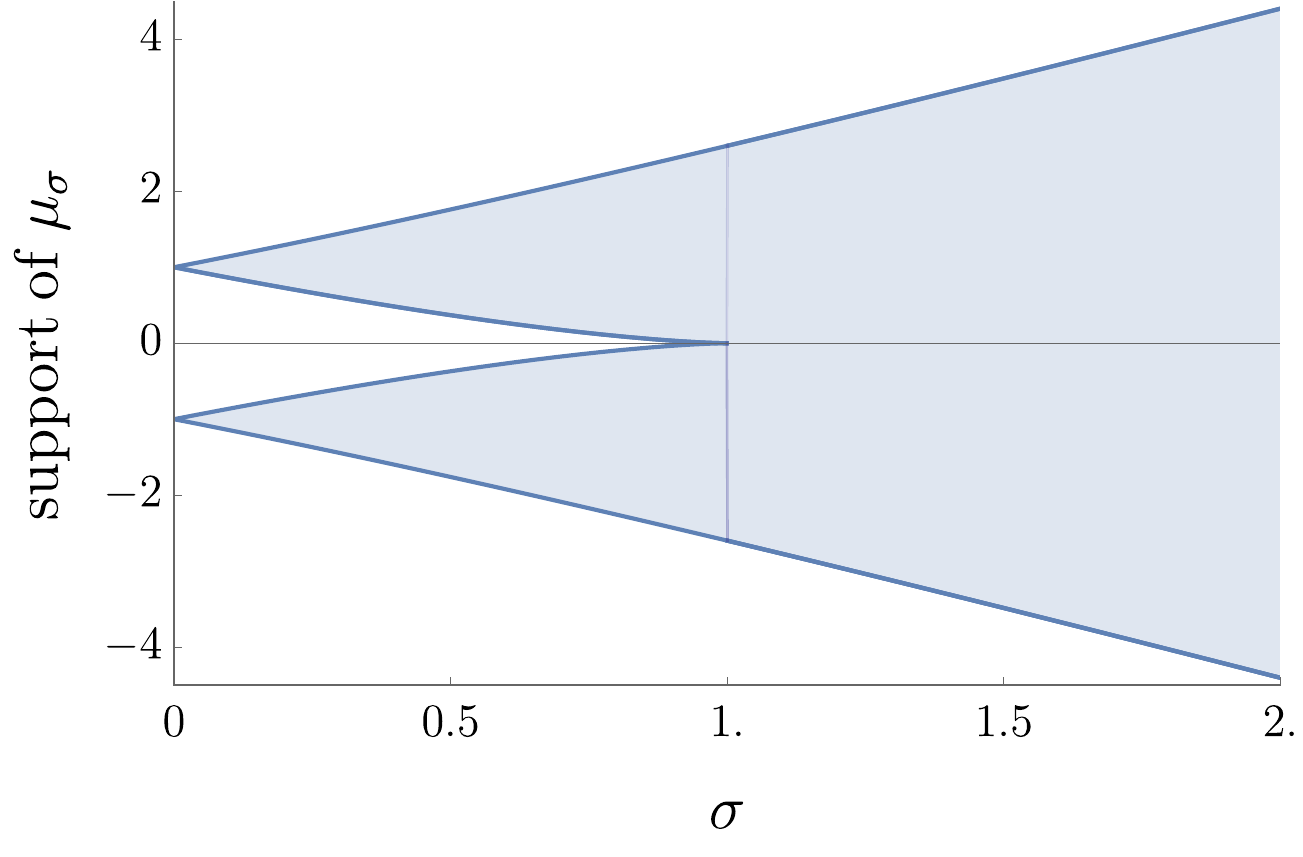}
    \caption{The support of the measure $\mu_\sigma = (1/2 \delta_{-1} + 1/2 \delta_1) \boxplus \mathrm{SC}_\sigma$ as a function of $\sigma$. For $\sigma<1$, the measure is supported on two intervals $\pm [L(\sigma), R(\sigma)]$, while for $\sigma \geq 1$, the support is the interval $[-R(\sigma), R(\sigma)].$}
    \label{fig:support}
\end{figure}

One could have obtained the result above on the support of the measure $\mu_\sigma$ using directly the machinery of Voiculescu's $\mathcal R$-transform, see \cite[Lecture 12]{nica2006lectures}. The $\mathcal R$-transform of the measure $\mu_\sigma$ is the sum of the $\mathcal R$-transforms of the Bernoulli, resp.~the semicircular distributions: 
$$\mathcal R_{\mu_\sigma}(z) = \mathcal R_{\textrm{Bernoulli}}(z) + \mathcal R_{\textrm{SC}_\sigma}(z) = \frac{\sqrt{1+4z^2}-1}{2z} + \sigma z.$$
From the $\mathcal R$-transform one can obtain a cubic equation satisfied by the Cauchy transform of the measure $\mu_\sigma$. Then, using Stieltjes' inversion formula, we can compute (intricate cubic roots) formulas for the density of the measure $\mu_\sigma$. We plot these theoretic curves against histograms of the random matrices $\ii J_{2n} - G_{n}/\sqrt{2n}$ in Figure \ref{fig:density}.

\begin{figure}[htb]
    \centering
    \includegraphics[width=0.47\textwidth]{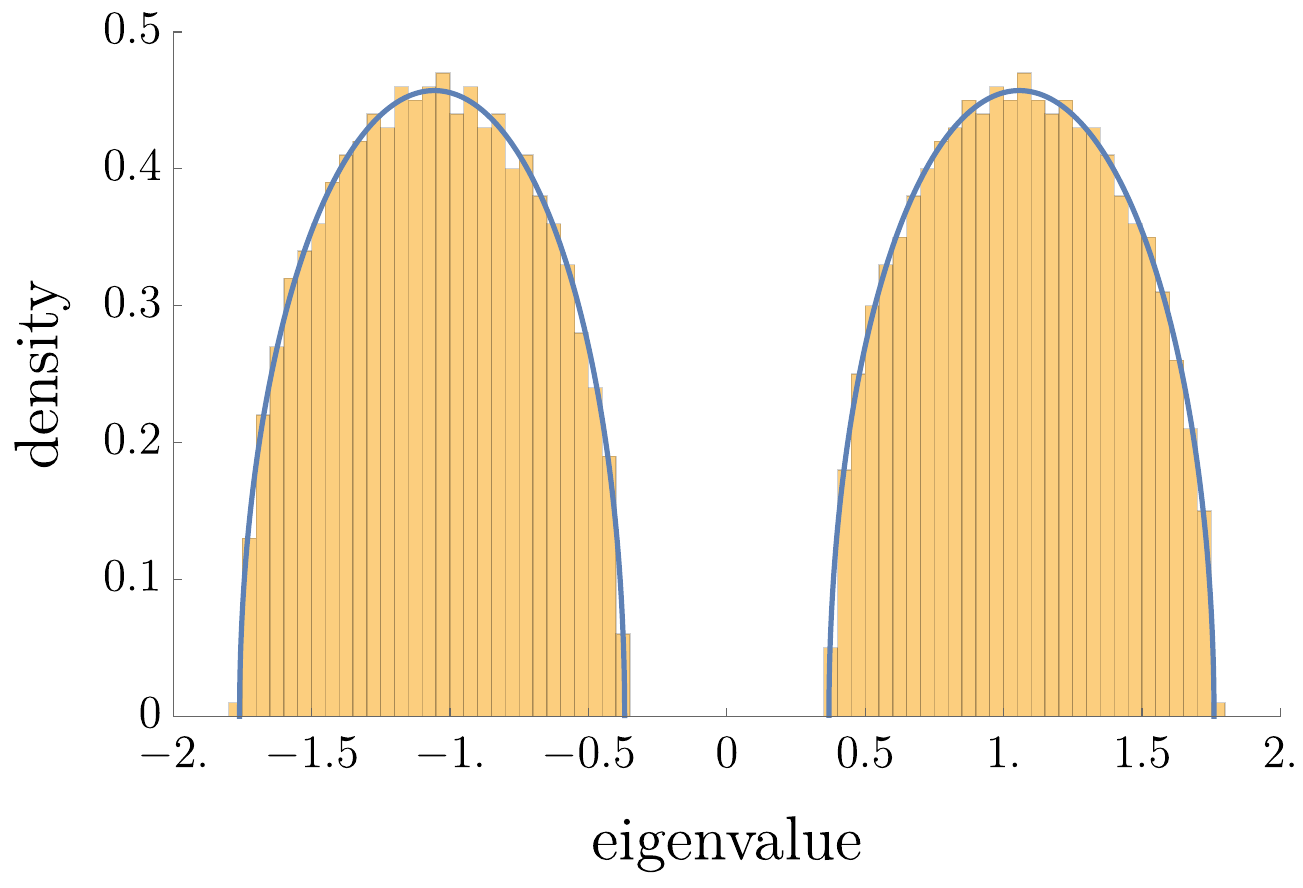} \qquad 
    \includegraphics[width=0.47\textwidth]{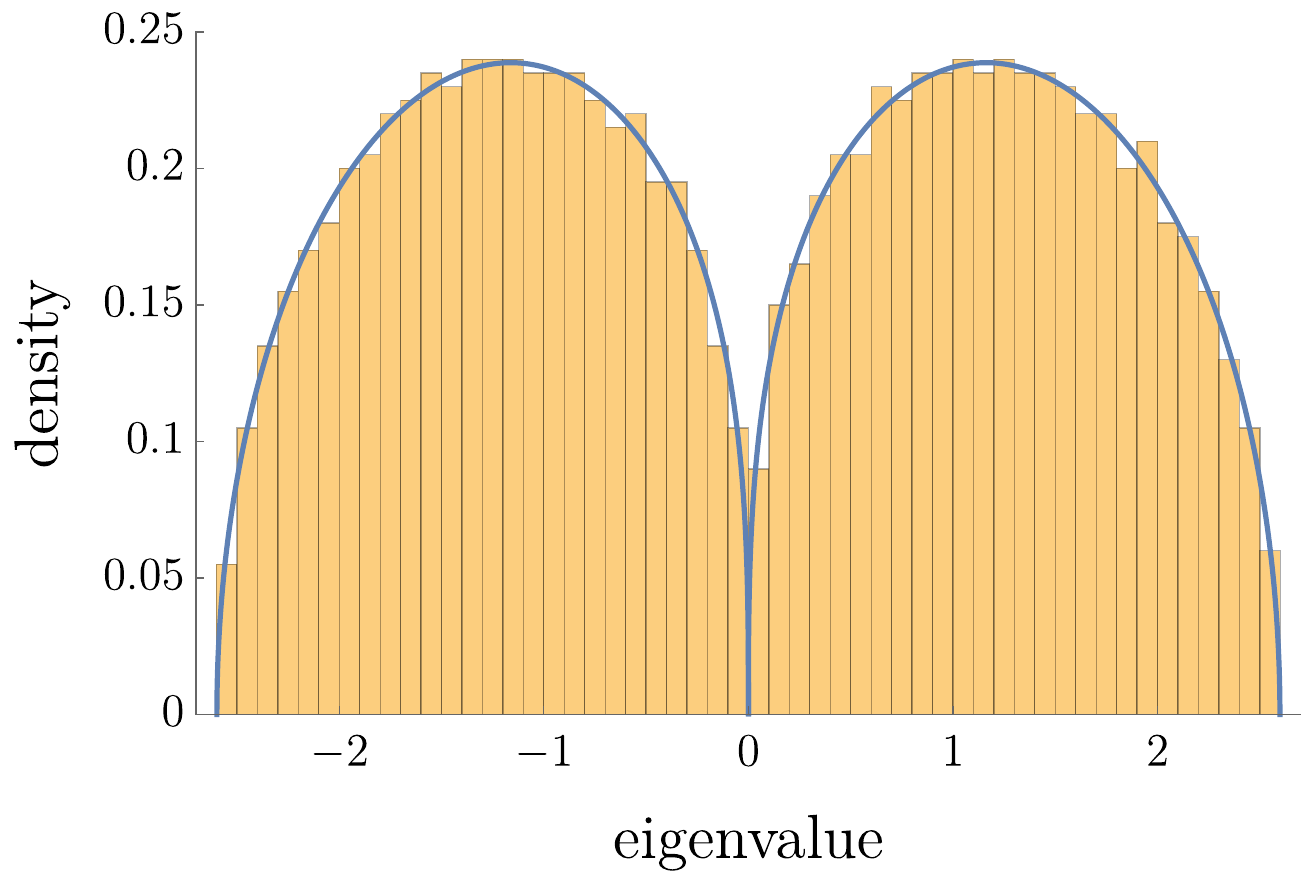} \\
    \includegraphics[width=0.47\textwidth]{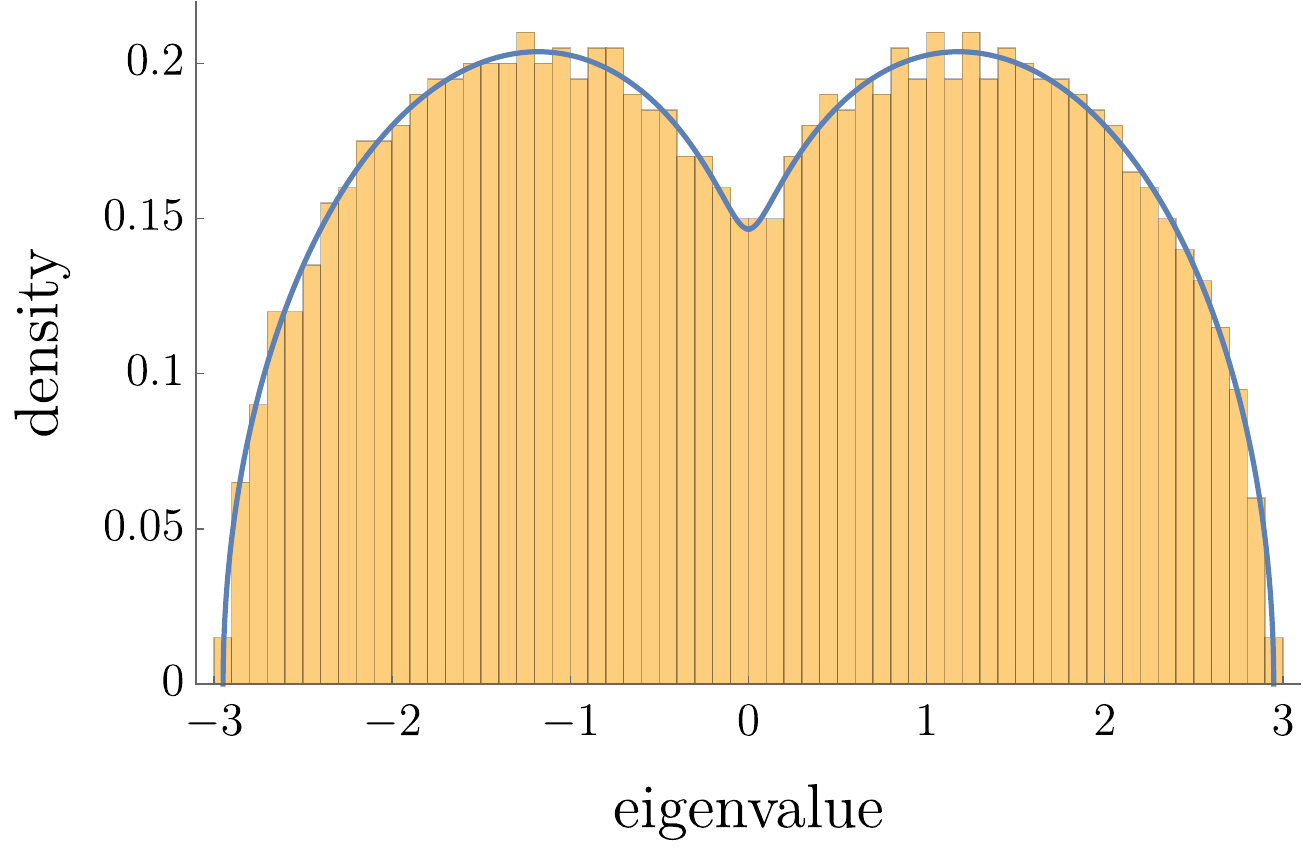} \qquad
    \includegraphics[width=0.47\textwidth]{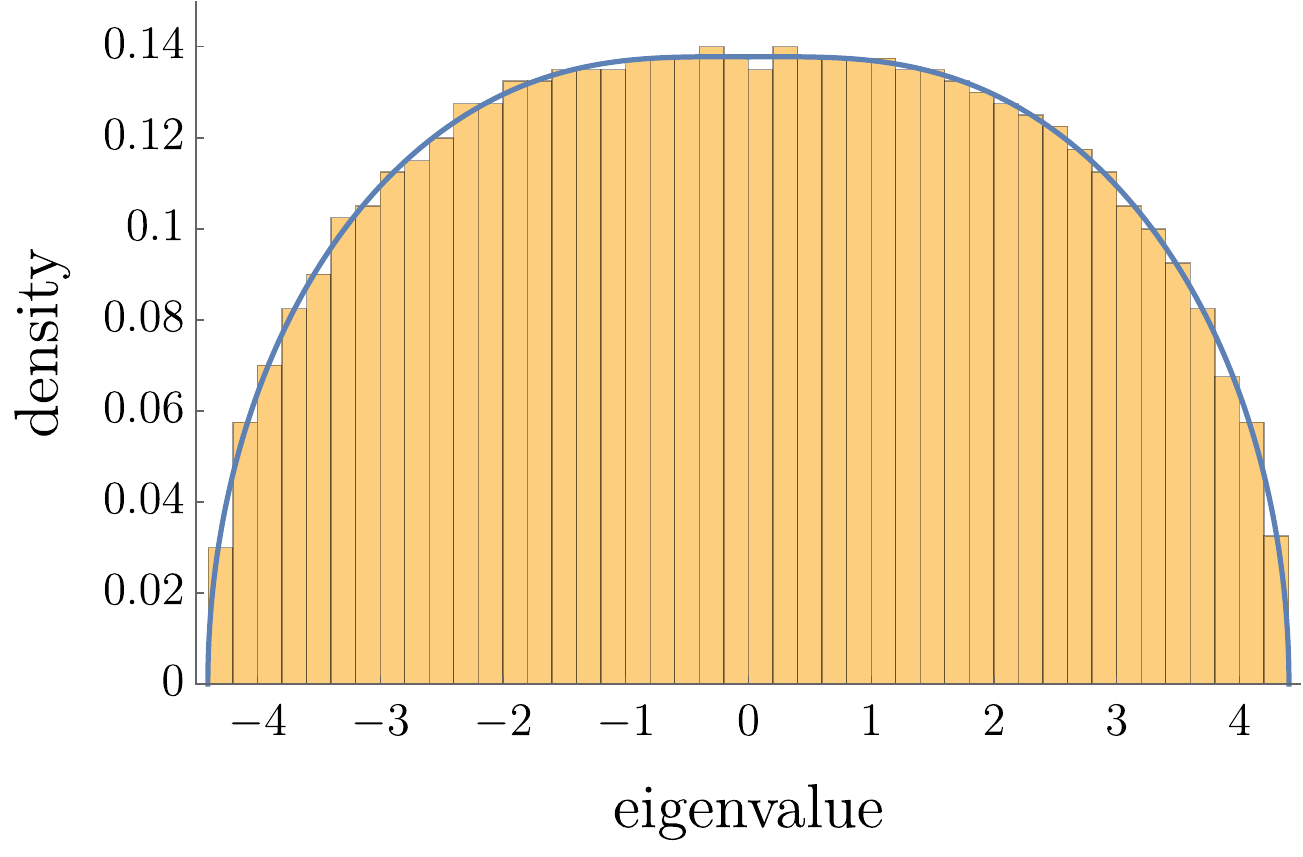}
    \caption{Theoretical limit densities (blue) vs.~empirical eigenvalue histograms (yellow) for a single realization of $\ii J_{2n} - G_{2n}/\sqrt{2n}$, with $n=1000$. From top to bottom, left to right: $\sigma = 0.5, 1, 1.2, 2$.}
    \label{fig:density}
\end{figure}

From result above, we have seen that the proper normalization of the $\GOE(2n, \sigma)$ matrix (with $\sigma >0$ fixed) is $(2n)^{-1/2}$, hence we need to consider elements of the ensemble $\RQCM(2n, \sigma / \sqrt{2n})$. 

\begin{theorem}\label{thm:RQCM-eigenvalues}
The limiting eigenvalue distribution of a random quantum covariance matrix $S_{2n} \in \RQCM(2n, \sigma/\sqrt{2n})$ is a shifted semicircular distribution $\mathrm{SC}_{R(\sigma), \sigma}$, where 
\begin{equation}\label{eq:def-SC}
    \mathrm{d} \mathrm{SC}_{m,\sigma} = \frac{\sqrt{4\sigma^2-(x-m)^2}}{2 \pi \sigma^2}  \mathbf{1}_{|x-m| \leq 2\sigma}(x) \mathrm{d}x.
\end{equation}
\end{theorem}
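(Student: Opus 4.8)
The plan is to reduce the statement to a direct combination of Wigner's theorem (\cref{thm:GOE-limit}) and the convergence of the scalar shift established in \cref{prop:convergence-shifted-GOE}. First I would observe that a matrix $G \sim \GOE(2n, \sigma/\sqrt{2n})$ has the same distribution as $G_{2n}/\sqrt{2n}$ with $G_{2n} \sim \GOE(2n, \sigma)$, since rescaling the entry variance by $1/(2n)$ amounts to rescaling the matrix by $(2n)^{-1/2}$. Under this identification, the defining formula of \cref{def:RQCM-ensemble} becomes
$$S_{2n} = \frac{G_{2n}}{\sqrt{2n}} + c_n I_{2n}, \qquad c_n := \lambda_{\max}\Big(\ii J_{2n} - \frac{G_{2n}}{\sqrt{2n}}\Big).$$
The key structural point is that $S_{2n}$ is a scalar shift of $G_{2n}/\sqrt{2n}$: its eigenvalues are those of $G_{2n}/\sqrt{2n}$ translated by the single random scalar $c_n$. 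Hence the empirical spectral distribution of $S_{2n}$ is the pushforward of that of $G_{2n}/\sqrt{2n}$ under the map $x \mapsto x + c_n$.

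Next I would invoke the two limiting inputs. By \cref{thm:GOE-limit}, almost surely $\mu_{G_{2n}/\sqrt{2n}} \to \mathrm{SC}_\sigma$ weakly. By \cref{prop:convergence-shifted-GOE}, almost surely $c_n \to R(\sigma)$. On the almost sure event where both convergences hold, it remains to show that translating by $c_n$ and passing to the limit yields the translate of $\mathrm{SC}_\sigma$ by $R(\sigma)$, which is exactly $\mathrm{SC}_{R(\sigma), \sigma}$ as defined in \eqref{eq:def-SC}.

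The only point requiring (mild) care is this last combination of a weak limit of measures with the convergence of the deterministic shift. For a bounded uniformly continuous test function $f$, I would write
$$\int f \, \dd \mu_{S_{2n}} = \int f(x + c_n) \, \dd \mu_{G_{2n}/\sqrt{2n}}(x),$$
and split off the target $\int f(x + R(\sigma)) \, \dd \mu_{G_{2n}/\sqrt{2n}}(x)$. The difference of the two integrals is bounded by $\sup_x |f(x + c_n) - f(x + R(\sigma))|$, which tends to $0$ by uniform continuity of $f$ together with $c_n \to R(\sigma)$; the target integral converges to $\int f(x + R(\sigma)) \, \dd \mathrm{SC}_\sigma(x)$ by Wigner's theorem. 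This identifies the limit as the pushforward of $\mathrm{SC}_\sigma$ under $x \mapsto x + R(\sigma)$, completing the argument.

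I do not anticipate a genuine obstacle here: essentially all of the analytic content has been front-loaded into \cref{prop:convergence-shifted-GOE}, and the present theorem is a clean corollary. The only thing to watch is that the shift is a single scalar applied uniformly to every eigenvalue, so it acts as a rigid translation of the spectrum and does not interact with the free-convolution structure of $\mu_\sigma$; the nontrivial value $R(\sigma)$ enters only through the almost sure limit of $c_n$.
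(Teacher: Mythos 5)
Your proof is correct and follows exactly the route the paper intends: the paper states \cref{thm:RQCM-eigenvalues} without a separate proof, presenting it as an immediate corollary of \cref{prop:convergence-shifted-GOE} (which gives $c_n = \lambda_{\max}(\ii J_{2n} - G_{2n}/\sqrt{2n}) \to R(\sigma)$ almost surely) combined with Wigner's theorem (\cref{thm:GOE-limit}) under the same rescaling identification $\GOE(2n,\sigma/\sqrt{2n}) \sim G_{2n}/\sqrt{2n}$ that you use. Your explicit test-function argument showing that a weak limit composes with a convergent rigid translation is precisely the routine step the paper leaves implicit, and you handle it correctly.
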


Let us record the behavior in $\sigma$ of the shift parameter $R$ from Eq.~\eqref{eq:def-R}:
\begin{align*}
    \text{as } \sigma \to 0, \qquad R(\sigma) &= 1 + \sqrt 2 \sigma + \frac{\sigma^2}{4} + O(\sigma^3)\\
    \text{as }\sigma \to \infty, \qquad R(\sigma) &= 2 \sigma + \frac 1 \sigma - \frac{5}{4 \sigma^3} + O(\sigma^{-5}).
\end{align*}
Note that the leading order of the expressions above match, respectively, the negative parts of the Bernoulli and semicircular distributions.

\bigskip

We now consider the \emph{partial transposition} of the random covariance matrices introduced in Definition \ref{def:RQCM-ensemble}. Recall from Eq.~\eqref{eq:PPT-condition} that a $n$-mode quantum covariance matrix $S$ satisfies the positive partial transposition (PPT) entanglement criterion with respect to the mode bipartition 
$$n = m + (n-m)$$
if the following matrix is positive semidefinite: 
$$S + \ii \begin{bmatrix} J_{2m} & 0_{2m \times 2(n-m)} \\ 0_{2(n-m) \times 2m} & J_{2(n-m)} \end{bmatrix}.$$

In the following proposition, we show that, in the limit of large $n$, random quantum covariance matrices satisfy automatically the PPT criterion. Later, in Section \ref{sec:entanglement}, we show that there exists a large proportion of PPT entangled Gaussian states.  

\begin{theorem}\label{thm:PPT}
Let $S_{2n} \in \RQCM(2n, \sigma/\sqrt{2n})$ be a normalized random quantum covariance matrix and $1 \leq m = m(n) < n$ be a parameter introducing a bipartition of the modes. Then, almost surely as $n \to \infty$, the smallest eigenvalue of the matrix 
$$S_{2n} + \ii \begin{bmatrix} J_{2m} & 0_{2m \times 2(n-m)} \\ 0_{2(n-m) \times 2m} & J_{2(n-m)} \end{bmatrix}$$
converges to 0. In particular, for all $\epsilon > 0$, the random quantum covariance matrix $S_{2n} + \epsilon I_{2n}$ satisfies the PPT criterion from Eq.~\eqref{eq:PPT-condition}.
\end{theorem}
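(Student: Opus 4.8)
The plan is to isolate the scalar identity shift built into $S_{2n}$ from its fluctuating $\GOE$ part, and to observe that the partial transposition only replaces the symplectic form $J_{2n}$ by the matrix
$$\tilde J := \begin{bmatrix} J_{2m} & 0 \\ 0 & -J_{2(n-m)} \end{bmatrix},$$
i.e.\ the antisymmetric matrix appearing in the PPT criterion \eqref{eq:PPT-condition}. The crucial elementary fact is that $\ii \tilde J$ is Hermitian with $\tilde J^2 = -I_{2n}$, so its spectrum is $\{+1,-1\}$ with each eigenvalue of multiplicity exactly $n$ --- precisely the spectrum of $\ii J_{2n}$ --- for every $n$ and every value of the bipartition parameter $m=m(n)$. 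In particular the empirical spectral distribution of $\ii \tilde J$ equals the Bernoulli law $\frac 1 2 \delta_{-1} + \frac 1 2 \delta_1$ for all $n$, independently of $m$.

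Writing $S_{2n} = G_{2n}/\sqrt{2n} + R_n I_{2n}$ with $G_{2n} \sim \GOE(2n,\sigma)$ and $R_n := \lambda_{\max}(\ii J_{2n} - G_{2n}/\sqrt{2n})$, the scalar shift factors out of the smallest eigenvalue:
$$\lambda_{\min}\big(S_{2n} + \ii \tilde J\big) = R_n + \lambda_{\min}\Big(\tfrac{G_{2n}}{\sqrt{2n}} + \ii \tilde J\Big).$$
By \cref{prop:convergence-shifted-GOE} the first term converges almost surely to $R(\sigma)$. For the second term I would re-run the free-probability argument of that proposition with $\ii J_{2n}$ replaced by $\ii \tilde J$: since $\ii \tilde J$ has the same (Bernoulli) limiting distribution and $-s \overset{\mathrm{law}}{=} s$ for a semicircular element $s$, Voiculescu's theorem together with the strong asymptotic freeness of Gaussian and deterministic matrices \cite{male2012norm,collins2014strong} give that $G_{2n}/\sqrt{2n} + \ii \tilde J$ converges strongly, almost surely, to the same measure $\mu_\sigma = (\frac 1 2 \delta_{-1} + \frac 1 2 \delta_1) \boxplus \mathrm{SC}_\sigma$. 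Strong convergence forces the smallest eigenvalue to the bottom edge of $\operatorname{supp}(\mu_\sigma)$, and since $\mu_\sigma$ is symmetric about $0$ this edge is $-R(\sigma)$. Adding the two limits yields $\lambda_{\min}(S_{2n} + \ii \tilde J) \to R(\sigma) - R(\sigma) = 0$ almost surely. The last claim is then immediate: for fixed $\epsilon>0$,
$$\lambda_{\min}\big((S_{2n} + \epsilon I_{2n}) + \ii \tilde J\big) = \epsilon + \lambda_{\min}\big(S_{2n} + \ii \tilde J\big) \longrightarrow \epsilon > 0,$$
so almost surely $S_{2n}+\epsilon I_{2n}$ satisfies \eqref{eq:PPT-condition} for all large enough $n$.

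The main obstacle is the middle step: justifying that the strong-convergence machinery of \cref{prop:convergence-shifted-GOE} transfers to $\ii \tilde J$ \emph{uniformly} in the bipartition, given that $m=m(n)$ may vary arbitrarily with $n$. The point to stress is that both the limiting measure $\mu_\sigma$ and its edges depend on $\tilde J$ only through its limiting spectral distribution, which is the fixed Bernoulli law for every $m$; hence no new spectral analysis is required and the deterministic sequence $\ii\tilde J$ converges strongly to a Bernoulli element regardless of $m(n)$. A clean alternative that avoids re-deriving strong convergence is to note that $\tilde J = O_n^\top J_{2n} O_n$ for some $O_n \in \O(2n)$ (both being real antisymmetric with all singular values equal to $1$), so that conjugating by $O_n$ and using the orthogonal invariance of the $\GOE$ distribution reduces $\lambda_{\min}(G_{2n}/\sqrt{2n} + \ii \tilde J)$ exactly to the quantity controlled by \cref{prop:convergence-shifted-GOE}; there one must only check that this reduction is compatible with almost-sure (rather than merely in-distribution) convergence along the full sequence, which holds because the conjugated matrices again form a sequence of $\GOE$ matrices.
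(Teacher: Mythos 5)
Your proof is correct and takes essentially the same approach as the paper: both hinge on the observation that the PPT-transformed symplectic form has exactly the same spectrum ($\pm 1$, each with multiplicity $n$) as $J_{2n}$, so the strong-convergence argument of \cref{prop:convergence-shifted-GOE} applies unchanged and the shift $R_n \to R(\sigma)$ cancels against the lower edge $-R(\sigma)$ of the symmetric limiting measure $\mu_\sigma$. The paper's proof is simply a terser version of yours --- it states the same-spectrum fact and defers to the proof of \cref{prop:convergence-shifted-GOE}, then adds $\epsilon$ for positivity --- whereas you additionally spell out the decomposition $\lambda_{\min}(S_{2n}+\ii \tilde J)=R_n+\lambda_{\min}\big(G_{2n}/\sqrt{2n}+\ii \tilde J\big)$ and the uniformity of the argument in the bipartition $m(n)$.
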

\begin{proof}
    First, note that the deterministic matrices $J_{2n}$ and $\left[ \begin{smallmatrix} J_{2m} & 0 \\ 0 & J_{2(n-m)} \end{smallmatrix} \right]$ have the same spectrum ($\pm 1$ with multiplicity $n$). Hence, the almost sure convergence of the smallest eigenvalue of the matrix from the statement follows in the same way as in the proof of Proposition \ref{prop:convergence-shifted-GOE}. Adding an arbitrary positive number $\epsilon$ to the random quantum covariance matrix ensures the positivity as $n \to \infty$.
\end{proof}

\bigskip

The \emph{purity} $\mu$ of a Gaussian quantum state can be expressed in terms of its covariance matrix as \cite[Section 3.5]{serafini}, \cite{deGosson2019purity}
$$\mu = \frac{1}{\sqrt{ \det S}}.$$
In the case of the model of random quantum covariance matrices we study, the asymptotic behavior of the purity can be described as follows. 
\begin{prop}\label{prop:purity}
Let $S_{2n} \in \RQCM(2n, \sigma/\sqrt{2n})$ be a normalized random quantum covariance matrix. Define the following quantity (see Eqs.~\eqref{eq:def-R}, \eqref{eq:def-SC}): 
\begin{equation}\label{eq:def-LD}
    \mathrm{LD}(\sigma) := \int_{R(\sigma)-2\sigma}^{R(\sigma)+2\sigma} \log(x) \mathrm{d} \mathrm{SC}_{R(\sigma),\sigma}(x).
\end{equation}
Then, as $n \to \infty$, in probability, 
$$- \frac 1 n \log \mu(S_{2n}) \to \mathrm{LD}(\sigma).$$
In other words, the purity of a random quantum covariance matrix behaves as 
$$\mu(S_{2n}) \sim \exp (-n \mathrm{LD}(\sigma))$$
in the limit of large number of modes. 
\end{prop}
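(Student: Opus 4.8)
The plan is to recognize the normalized log-purity as a linear spectral statistic and then reduce the claim to the convergence of the empirical spectral distribution already established in \cref{thm:RQCM-eigenvalues}. Since $\mu(S_{2n}) = (\det S_{2n})^{-1/2}$, I would first write
$$-\frac 1 n \log \mu(S_{2n}) = \frac{1}{2n}\log\det S_{2n} = \frac{1}{2n}\sum_{i=1}^{2n}\log \lambda_i(S_{2n}) = \int_0^\infty \log(x)\, \mathrm{d}\mu_{S_{2n}}(x),$$
where $\mu_{S_{2n}} = \frac{1}{2n}\sum_i \delta_{\lambda_i(S_{2n})}$ is the empirical spectral distribution. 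By \cref{thm:RQCM-eigenvalues}, $\mu_{S_{2n}}$ converges weakly (almost surely) to $\mathrm{SC}_{R(\sigma),\sigma}$, whose support $[R(\sigma)-2\sigma, R(\sigma)+2\sigma]$ lies in $(0,\infty)$; I would record this positivity separately by checking $R(\sigma) > 2\sigma$ from the explicit formula \eqref{eq:def-R}, which is also consistent with the asymptotics $R(\sigma)-2\sigma \to 1$ as $\sigma\to 0$ and $R(\sigma) - 2\sigma \to 1/\sigma$ as $\sigma \to \infty$. Granting this, $\int \log \, \mathrm{d}\mathrm{SC}_{R(\sigma),\sigma} = \mathrm{LD}(\sigma)$ is finite, and the goal becomes to show the spectral integral converges to it.

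The main obstacle is that $x \mapsto \log x$ is neither bounded nor continuous on $[0,\infty)$: weak convergence of $\mu_{S_{2n}}$ only controls integrals of \emph{bounded continuous} test functions, whereas $\log$ blows up at $0$. A single eigenvalue of $S_{2n}$ drifting towards $0$ (or a small cluster of such eigenvalues) could spoil the convergence of $\int \log\, \mathrm d\mu_{S_{2n}}$ even though it carries vanishing mass. Thus the crux is to show that, asymptotically, no eigenvalue of $S_{2n}$ approaches $0$ and none escapes to $+\infty$.

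I would resolve this using the almost sure convergence of the extreme eigenvalues underlying \cref{prop:convergence-shifted-GOE}. The spectrum of $S_{2n}$ is a rigid translate of that of $G_{2n}/\sqrt{2n}$: writing $c_n := \lambda_{\max}(\ii J_{2n} - G_{2n}/\sqrt{2n})$ one has $\lambda_i(S_{2n}) = \lambda_i(G_{2n}/\sqrt{2n}) + c_n$. Since $c_n \to R(\sigma)$ almost surely by \cref{prop:convergence-shifted-GOE}, while $\lambda_{\min}(G_{2n}/\sqrt{2n}) \to -2\sigma$ and $\lambda_{\max}(G_{2n}/\sqrt{2n}) \to 2\sigma$ almost surely (the edge theorem cited earlier, or equivalently the strong convergence invoked in the proof of \cref{prop:convergence-shifted-GOE}), it follows that $\lambda_{\min}(S_{2n}) \to R(\sigma)-2\sigma > 0$ and $\lambda_{\max}(S_{2n}) \to R(\sigma)+2\sigma$ almost surely. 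Hence, fixing any $\epsilon > 0$ with $R(\sigma) - 2\sigma - \epsilon > 0$, almost surely for all large $n$ the entire spectrum of $S_{2n}$ is contained in the compact interval $[c,d] := [R(\sigma)-2\sigma-\epsilon,\; R(\sigma)+2\sigma+\epsilon] \subset (0,\infty)$, which is bounded away from the singularity of $\log$.

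With the spectrum confined to $[c,d]$, I would replace $\log$ by a compactly supported continuous function $f$ that agrees with $\log$ on $[c,d]$. Then $\int \log\, \mathrm d\mu_{S_{2n}} = \int f\, \mathrm d\mu_{S_{2n}}$ eventually almost surely, and since $\mathrm{SC}_{R(\sigma),\sigma}$ is supported inside $(c,d)$ one also has $\int f\, \mathrm d\mathrm{SC}_{R(\sigma),\sigma} = \mathrm{LD}(\sigma)$. Applying weak convergence to the bounded continuous $f$ yields $\int f\, \mathrm d\mu_{S_{2n}} \to \int f\, \mathrm d\mathrm{SC}_{R(\sigma),\sigma}$, hence $-\frac 1 n \log \mu(S_{2n}) \to \mathrm{LD}(\sigma)$ almost surely, which in particular implies the stated convergence in probability. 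The only genuinely delicate step is the edge localization of the third paragraph—equivalently, the absence of outlier eigenvalues near $0$—for which the strong-convergence results behind \cref{prop:convergence-shifted-GOE} are exactly what is required; the remainder is a soft weak-convergence argument.
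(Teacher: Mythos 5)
Your proposal is correct and follows essentially the same route as the paper: the paper's (very terse) proof likewise reduces the normalized log-purity to a linear eigenvalue statistic, invokes the strong convergence established in \cref{prop:convergence-shifted-GOE}, and notes that $R(\sigma) > 2\sigma$ so that $\log$ is bounded on the limiting support. Your edge-localization via the rigid spectral shift and the truncation of $\log$ to a bounded continuous function are exactly the details that the paper's two-sentence argument compresses.
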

\begin{proof}
    The result is a consequence of the convergence of linear statistics of eigenvalues for the random matrix model in Proposition \ref{prop:convergence-shifted-GOE}. Note that in our case, $R(\sigma) > 2\sigma$, so the $\log$ function is bounded on the compact interval $[R(\sigma)-2\sigma, R(\sigma)+2\sigma)]$.
\end{proof}

The integral in Eq.~\eqref{eq:def-LD} is transcendental and cannot be provided in closed form. We plot in Figure \ref{fig:LD} the behavior of this function of $\sigma$. For $\sigma=1$, we obtain approximately
$$\mu(S_{2n}) \sim \exp(-0.865668 \cdot  n).$$

\begin{figure}
    \centering
    \includegraphics[scale=0.5]{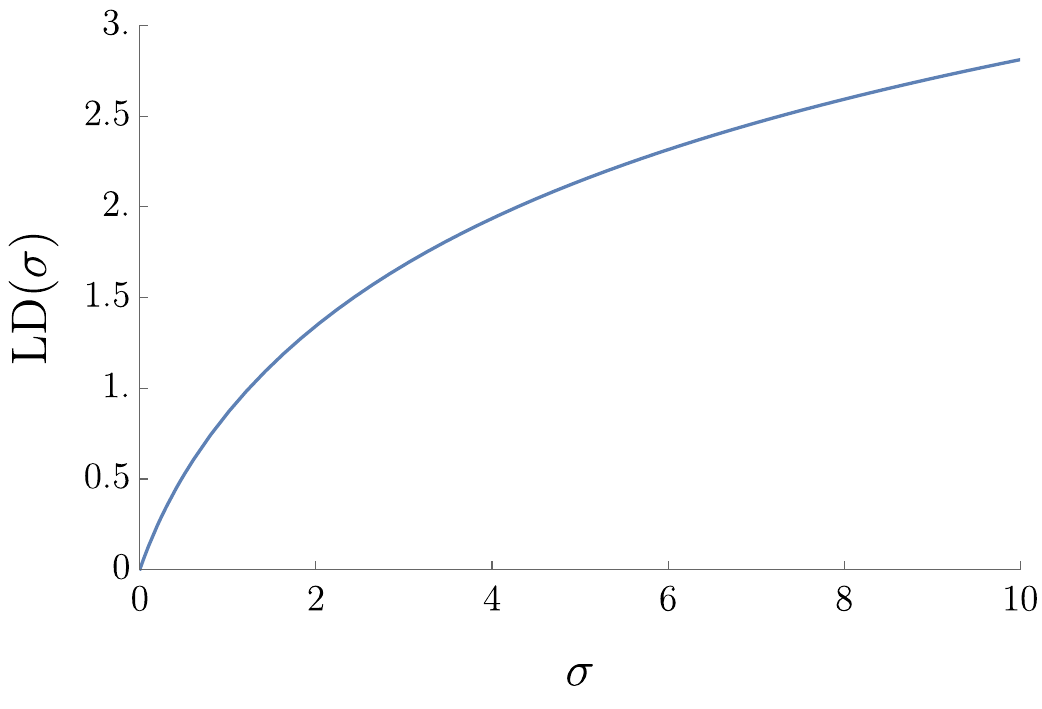}
    \caption{The rate function $\mathrm{LD}(\sigma)$ of the purity of a random quantum covariance matrix $\mu(S_{2n}) \sim \exp (-n \mathrm{LD}(\sigma))$. }
    \label{fig:LD}
\end{figure}

\bigskip

Let us now consider how the eigenvalue distribution of marginals of $\RQCM$ elements behaves in the large $n$ limit. 

\begin{prop}\label{prop:marginal-large-mn}
Let $S_{2n} \in \RQCM(2n, \sigma/\sqrt{2n})$ be a normalized random quantum covariance matrix. Consider its $m$-mode marginal $S_{2n}^{(2m)}$, where $m \to \infty$ such that $m/n \to t \in (0,1)$. Then, as $n \to \infty$, the random matrix $S_{2n}^{(2m)}$ behaves like \emph{shifted} $\RQCM(2m, \sqrt t\sigma)$ element, where the asymptotic shift is given by
$$\delta_{t, \sigma} = R(\sigma) - R(\sqrt t \sigma).$$
In particular, its limiting eigenvalue distribution is $\mathrm{SC}_{R(\sigma), \sqrt \sigma}$.
\end{prop}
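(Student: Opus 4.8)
The plan is to reduce everything to the two preceding results, \cref{prop:marginal-RQCM} and \cref{prop:convergence-shifted-GOE}, by exhibiting the marginal as a genuine RQCM on $2m$ modes plus a scalar shift that converges to $\delta_{t,\sigma}$. Writing $S_{2n} = S_G$ with $G \sim \GOE(2n, \sigma/\sqrt{2n})$, the computation in the proof of \cref{prop:marginal-RQCM} gives the exact finite-$n$ identity
\begin{equation*}
S_{2n}^{(2m)} = S_{G^{(2m)}} + \delta_n I_{2m}, \qquad \delta_n := \lambda_{\max}(\ii J_{2n} - G) - \lambda_{\max}\big(\ii J_{2m} - G^{(2m)}\big),
\end{equation*}
where $G^{(2m)}$, the top-left corner of $G$, is a $\GOE(2m, \sigma/\sqrt{2n})$ matrix and $S_{G^{(2m)}}$ is by definition an element of $\RQCM(2m, \sigma/\sqrt{2n})$. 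First I would rewrite the variance parameter of this block: since $\sigma/\sqrt{2n} = \sqrt{m/n}\,\sigma/\sqrt{2m}$ and $m/n \to t$, the block is a normalized RQCM whose effective variance parameter $\sqrt{m/n}\,\sigma$ tends to $\sqrt t\,\sigma$. This is precisely the identification of $S_{G^{(2m)}}$ as an (asymptotic) $\RQCM(2m, \sqrt t\,\sigma)$ element claimed in the statement.

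Next I would control the two terms of $\delta_n$ separately using \cref{prop:convergence-shifted-GOE}. The ambient term is $\lambda_{\max}(\ii J_{2n} - G_{2n}/\sqrt{2n})$ with $G_{2n} \sim \GOE(2n,\sigma)$, which converges almost surely to $R(\sigma)$. For the block term I would write $G^{(2m)} = \sqrt{m/n}\cdot \tilde G/\sqrt{2m}$ with $\tilde G \sim \GOE(2m, \sigma)$ and apply the same proposition with variance $\sqrt t\,\sigma$, obtaining $\lambda_{\max}(\ii J_{2m} - G^{(2m)}) \to R(\sqrt t\,\sigma)$ almost surely. On the intersection of these two almost sure events, $\delta_n \to R(\sigma) - R(\sqrt t\,\sigma) = \delta_{t,\sigma}$, which proves the asserted shift.

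Finally, for the limiting spectrum I would invoke \cref{thm:RQCM-eigenvalues} for the block: as a normalized $\RQCM(2m, \sqrt t\,\sigma)$ element, $S_{G^{(2m)}}$ has limiting eigenvalue distribution $\mathrm{SC}_{R(\sqrt t\,\sigma), \sqrt t\,\sigma}$. Translating by the scalar $\delta_n \to \delta_{t,\sigma}$ moves the center from $R(\sqrt t\,\sigma)$ to $R(\sqrt t\,\sigma) + \delta_{t,\sigma} = R(\sigma)$ while leaving the variance parameter unchanged, so the empirical distribution of $S_{2n}^{(2m)}$ converges weakly to $\mathrm{SC}_{R(\sigma), \sqrt t\,\sigma}$ (the second subscript in the statement should read $\sqrt t\,\sigma$); here I use that weak convergence is preserved under translation by a convergent sequence of constants.

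The main obstacle is the honest application of \cref{prop:convergence-shifted-GOE} to the corner $G^{(2m)}$, since that result is stated for a fixed variance $\sigma$, whereas the effective variance $\sqrt{m/n}\,\sigma$ of the block is only asymptotically equal to $\sqrt t\,\sigma$. I would resolve this by noting that the underlying strong convergence used in the proof of \cref{prop:convergence-shifted-GOE} (from \cite{male2012norm,collins2014strong}) is stable under multiplication by the scalars $\sqrt{m/n} \to \sqrt t$, and that $R$ is continuous, so the vanishing perturbation of the scaling does not affect the limit $R(\sqrt t\,\sigma)$ of the largest eigenvalue. Everything else is a matter of assembling facts already established earlier in the paper.
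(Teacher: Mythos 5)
Your proof is correct and follows essentially the same route as the paper's (much more compressed) argument: write the marginal, via \cref{prop:marginal-RQCM}, as the RQCM built from the GOE corner plus the scalar $\lambda_{\max}(\ii J_{2n}-G)-\lambda_{\max}\big(\ii J_{2m}-G^{(2m)}\big)$, identify the corner as an asymptotic $\RQCM(2m,\sqrt t\,\sigma)$ element, and send both largest eigenvalues to their limits $R(\sigma)$ and $R(\sqrt t\,\sigma)$ using \cref{prop:convergence-shifted-GOE}. Your write-up is in fact more careful than the paper's one-line proof --- notably the stability argument handling $\sqrt{m/n}\to\sqrt t$ and the translation of the limiting spectrum --- and you correctly identify that the statement's $\mathrm{SC}_{R(\sigma),\sqrt{\sigma}}$ is a typo for $\mathrm{SC}_{R(\sigma),\sqrt t\,\sigma}$.
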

\begin{proof}
    The result follows easily from Proposition \ref{prop:marginal-RQCM}:
    $$S_{2n}^{(2m)} = \Big( \frac{G_{2n}}{\sqrt{2n}} + \lambda_{\max}(\ii J_{2n} - G_{2n} / \sqrt{2n}) \Big)^{(2m)} \sim \frac{G_{2m}}{\sqrt{2m}} \frac{\sqrt{2m}}{\sqrt{2n}} + R(\sigma) - R(\sqrt t \sigma).$$
\end{proof}

\section{Symplectic eigenvalues}\label{sec:symplectic}

In this section, we compute the limiting distribution (in the large $n$ limit) of the \emph{symplectic eigenvalues} of a random Gaussian covariance matrix. Recall from \cref{sec:background} that the symplectic eigenvalues of a covariance matrix $S$ are the non-negative (usual) eigenvalues of the matrix $\ii J_{2n} \cdot S$. As in the previous section, we shall use free probability theory to compute the limiting spectrum of this matrix, in particular, the theory of $\mathcal S$-transform, see \cite[Lecture 18]{nica2006lectures}. 

We are interested in the free multiplicative product between a general semicircular distribution (with non-negative support) $\mathrm{SC}_{m,\sigma}$ and a Bernoulli distribution $\mathrm{B}:=(\delta_{-1} + \delta_1)/2$. Recall that the (shifted) semicircular distribution with mean $m$ and variance $\sigma^2$ is given by 
$$\mathrm{d} \mathrm{SC}_{m,\sigma} := \frac{\sqrt{4\sigma^2-(x-m)^2}}{2 \pi \sigma^2}  \mathbf{1}_{|x-m| \leq 2\sigma}(x) \mathrm{d}x.$$
It is supported on the interval $[m-2\sigma, m+2\sigma]$; in what follows we shall assume implicitly that $m \geq 2\sigma$, although this assumption is not needed in the $\mathcal S$-transform computations. In free probability theory, the $\mathcal S$-transform has the property that, given two non-commutative random variables $x,y$, with $x \geq 0$, 
$$\mathcal S(\sqrt x y \sqrt x) = \mathcal S(x) \cdot \mathcal S(y).$$
It is a complex variable function defined by 
$$\mathcal S(z)  = \frac{1+z}{z} \chi(z),$$
where $\chi(\cdot)$ is the functional inverse of the moment generating function $\psi$
$$\psi(z) = \sum_{n=1}^\infty z^n m_n,$$
where $m_n$ are the moments of the corresponding non-commutative random variable. Direct computations yield
\begin{align*}
    \mathcal S_{\mathrm{B}}(z) &= \frac{\sqrt{1+z}}{\sqrt z}\\
    \mathcal S_{\mathrm{SC}_{m, \sigma}}(z) &= \frac{2}{m+\sqrt{m^2+4\sigma^2z}},
\end{align*}
which yield 
$$\mathcal S_{\mathrm{B} \boxtimes \mathrm{SC}_{m, \sigma}}(z) = \frac{2\sqrt{1+z}}{\sqrt z(m+\sqrt{m^2+4\sigma^2z})}.$$
We now follow the inverse procedure, extracting the distribution $\mathrm{B} \boxtimes \mathrm{SC}_{m, \sigma}$ from its $\mathcal S$-transform. We find that the Stieltjes transform $G(z) = (1+\psi(1/z))/z$ of the free product distribution satisfies the algebraic equation 
$$\frac{2\sqrt{zG-1}}{\sqrt{zG}(m+\sqrt{m^2+4\sigma^2(zG-1)})} = \frac 1 z.$$
After some simplifications, we find that $G$ satisfies the cubic equation 
$$z\sigma^2 G^3 -\sigma^2(2z^2+\sigma^2)G^2+z(z^2+2\sigma^2-m^2)G -z^2=0.$$
We summarize these computations in the result below. 
\begin{theorem}\label{thm:symplectic-eigenvalues}
    The limiting \emph{symplectic} eigenvalue distribution of a random quantum covariance matrix $S_{2n}\in \RQCM(2n, \sigma/\sqrt{2n})$ is the non-negative part of the probability measure $\mathrm{B} \boxtimes \mathrm{SC}_{R(\sigma), \sigma}$, a free multiplicative convolution of a Bernoulli distribution and a non-centered semicircular distribution. The Stieltjes transform of this distribution satisfies the cubic equation 
    $$z\sigma^2 G^3 -\sigma^2(2z^2+\sigma^2)G^2+z(z^2+2\sigma^2-R(\sigma)^2)G -z^2=0.$$
    The support of the non-negative part of this distribution is the interval $[1, \sqrt{F(\sigma)}]$, with 
    \begin{align*}
    F(\sigma) &:= \frac{1}{2}-\frac{\sigma ^4}{8}+4 \sigma ^2+\sqrt{\sigma ^2+8} \sigma +\frac{1}{8} \sqrt{\sigma ^2+8} \sigma ^3+ \\
    &\qquad \frac{1}{64} \left(-4096 \sigma ^6+78336 \sigma ^4+49152 \sigma ^2+ \right.\\
    &\qquad \left.4096 \sqrt{\sigma ^2+8} \sigma +4096 \sqrt{\sigma ^2+8} \sigma ^5+33280 \sqrt{\sigma ^2+8} \sigma ^3+1024 \right)^{1/2}.
    \end{align*}
\end{theorem}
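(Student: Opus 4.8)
The plan is to recognize the symplectic eigenvalues as the spectrum of $\ii J_{2n} S_{2n}$ and to identify its limiting law as a free multiplicative convolution. First I would observe that, almost surely for large $n$, the matrix $S_{2n}$ is positive definite: its limiting spectrum $\mathrm{SC}_{R(\sigma),\sigma}$ is supported in $[R(\sigma)-2\sigma, R(\sigma)+2\sigma]$, and $R(\sigma) > 2\sigma$ (as already used in the proof of Proposition \ref{prop:purity}). Consequently $\ii J_{2n} S_{2n}$ is conjugate, via $S_{2n}^{1/2}$, to the Hermitian matrix $S_{2n}^{1/2}(\ii J_{2n}) S_{2n}^{1/2}$ and has the same spectrum. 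This conjugation is exactly what legitimizes the $\mathcal S$-transform machinery, since the multiplicativity $\mathcal S(\sqrt x\, y\, \sqrt x) = \mathcal S(x)\,\mathcal S(y)$ requires the sandwiching factor $x = S_{2n}$ to be non-negative.

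Next I would establish the asymptotic freeness of $S_{2n}$ and $\ii J_{2n}$. Writing $S_{2n} = G_{2n}/\sqrt{2n} + \lambda_{\max}(\ii J_{2n} - G_{2n}/\sqrt{2n}) I_{2n}$, the random part $G_{2n}/\sqrt{2n}$ is an orthogonally invariant GOE matrix, so by Voiculescu's theorem it is asymptotically free from the deterministic sequence $\ii J_{2n}$, whose spectral distribution converges to the Bernoulli law $\mathrm B = (\delta_{-1}+\delta_1)/2$. The scalar shift $\lambda_{\max}(\ii J_{2n} - G_{2n}/\sqrt{2n}) \to R(\sigma)$ almost surely by Proposition \ref{prop:convergence-shifted-GOE} and, being a multiple of the identity, preserves freeness; hence $S_{2n}^{1/2}(\ii J_{2n}) S_{2n}^{1/2}$ converges to $\mathrm{SC}_{R(\sigma),\sigma} \boxtimes \mathrm B$. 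Since $\mathrm B$ is symmetric about the origin, this free product is symmetric and its eigenvalues occur in $\pm$ pairs; the symplectic eigenvalues are the non-negative representatives, which is precisely why one retains the non-negative part of $\mathrm B \boxtimes \mathrm{SC}_{R(\sigma),\sigma}$.

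The cubic equation for the Stieltjes transform is then produced exactly by the computation carried out in the paragraphs preceding the theorem: one multiplies the two $\mathcal S$-transforms $\mathcal S_{\mathrm B}$ and $\mathcal S_{\mathrm{SC}_{m,\sigma}}$, inverts through the relations $\mathcal S(z) = \frac{1+z}{z}\chi(z)$ and $G(z) = (1+\psi(1/z))/z$, clears denominators, and specializes $m = R(\sigma)$. This yields the displayed cubic in $G$, so for this part there is nothing new to prove.

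The main work, and the genuine obstacle, is the determination of the support $[1, \sqrt{F(\sigma)}]$. For the lower edge I would argue conceptually rather than by computation: the construction of the ensemble forces $\lambda_{\min}(S_G - \ii J_{2n}) = 0$ for \emph{every} realization, so the smallest symplectic eigenvalue equals exactly $1$ for all $n$; if the limiting support had left edge $\ell \neq 1$, then either positive mass in $(\ell,1)$ would produce symplectic eigenvalues below $1$, or strong convergence would push all eigenvalues above $1$ — both contradicting the hard constraint. Hence the infimum of the support is exactly $1$. For the upper edge I would locate the right branch point of the algebraic Stieltjes transform: the edges of the support are the real values of $z$ at which the cubic acquires a double root in $G$, i.e.\ where the cubic and its $\partial_G$-derivative vanish simultaneously; eliminating $G$ gives a polynomial condition in $z$ whose relevant root is $z = \sqrt{F(\sigma)}$. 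I expect this elimination — resolving the discriminant of the cubic and extracting the correct branch in closed form — to be the painful step, responsible for the unwieldy nested-radical expression for $F(\sigma)$, and best carried out with computer algebra, with the value $z=1$ recovered by the same procedure serving as a consistency check against the construction.
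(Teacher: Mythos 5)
Your proposal is correct and follows essentially the same route as the paper: the limit law is identified as $\mathrm{B} \boxtimes \mathrm{SC}_{R(\sigma),\sigma}$ via the $\mathcal S$-transform computation preceding the theorem, and the support edges are found by locating where the cubic for the Stieltjes transform acquires a double root --- the paper computes the discriminant and finds six solutions ($z=\pm 1$, two complex, and $\pm\sqrt{F(\sigma)}$), taking the largest positive real one as the right edge. Your only departure, obtaining the left edge $z=1$ from the hard constraint $\lambda_{\min}(S_G - \ii J_{2n})=0$ (which forces the smallest symplectic eigenvalue to equal $1$ in every realization) instead of reading it off the discriminant, is a valid and nice consistency check rather than a genuinely different method.
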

\begin{proof}
    The only statement to be proven is the formula for the right end of the support of $\mathrm{B} \boxtimes \mathrm{SC}_{R(\sigma), \sigma}$. Computing the discriminant of the cubic equation satisfied by $G$ and solving for $z$ yields a set of $6$ solutions: $z = \pm 1$, two complex solutions, and two opposite real solutions. The formula in the statement corresponds to largest positive real solution. 
\end{proof}
\begin{rem}
    The behavior of the right edge of the support $\sigma \mapsto \sqrt{F(\sigma)}$ is surprisingly close to a linear function. The asymptotic behavior in the limiting cases is given by
    $$\text{as $\sigma \to 0$: } \sqrt{F(\sigma)} \sim  1+2 \sqrt{2} \sigma  \qquad \text{ and } \qquad \text{as $\sigma \to \infty$: } \sqrt{F(\sigma)} \sim \sqrt{\frac{11}{2}+\frac{5 \sqrt{5}}{2}} \sigma .$$
\end{rem}

Writing down the cubic equation solutions is cumbersome, so we decide here to focus on some special cases and on the support of the measure $\mathrm{B} \boxtimes \mathrm{SC}_{m, \sigma}$. For example, after replacing $m=R(\sigma)$ with the value from Eq.~\eqref{eq:def-R} and taking $\sigma = 1$, we find that the symplectic eigenvalues of $S_{2n}$ have, in the limit $n \to \infty$, density
$$\frac{-4 x^4-73 x^2+\left(-8 x^6+510 x^4+3 \left(9 \sqrt{-16 x^6+264 x^4-237 x^2-11}+73\right) x^2+8\right)^{2/3}-4}{2 \sqrt{3} \pi  x \sqrt[3]{-8 x^6+510 x^4+3 \left(9 \sqrt{-16 x^6+264 x^4-237 x^2-11}+73\right) x^2+8}}$$
supported on the interval 
$$x \in \left[ 1, \sqrt{\frac{9 \sqrt{3}}{2}+\frac{31}{4}}\right] \approx [1,3.94262].$$
We display this density along with numerical experiments in Figure \ref{fig:symplectic-eigenvalues}, left panel; in the right panel we plot the analytical curve along with the histogram in the case $\sigma = 10$.

\begin{figure}[t]
    \centering
    \includegraphics[width=0.47\textwidth]{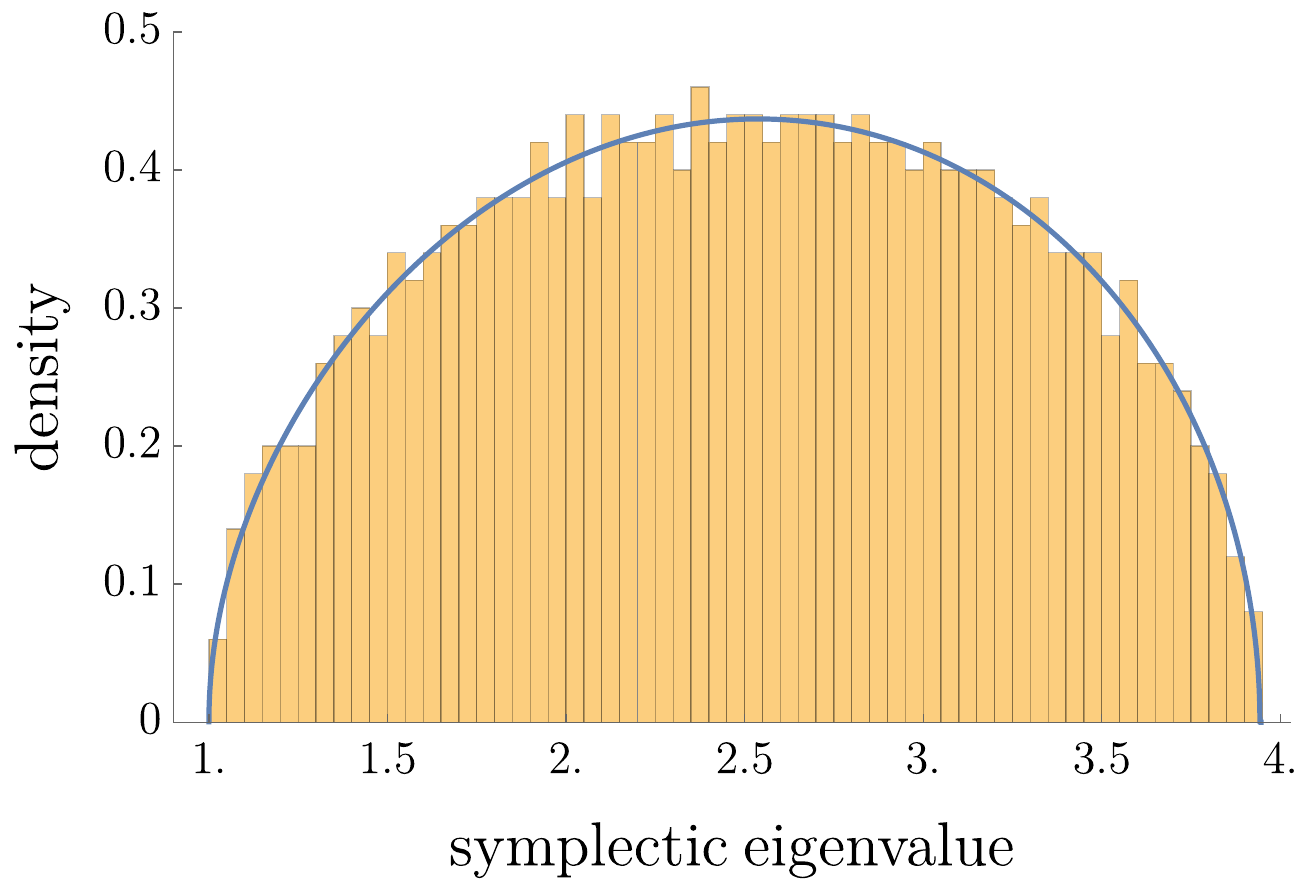} \qquad \includegraphics[width=0.47\textwidth]{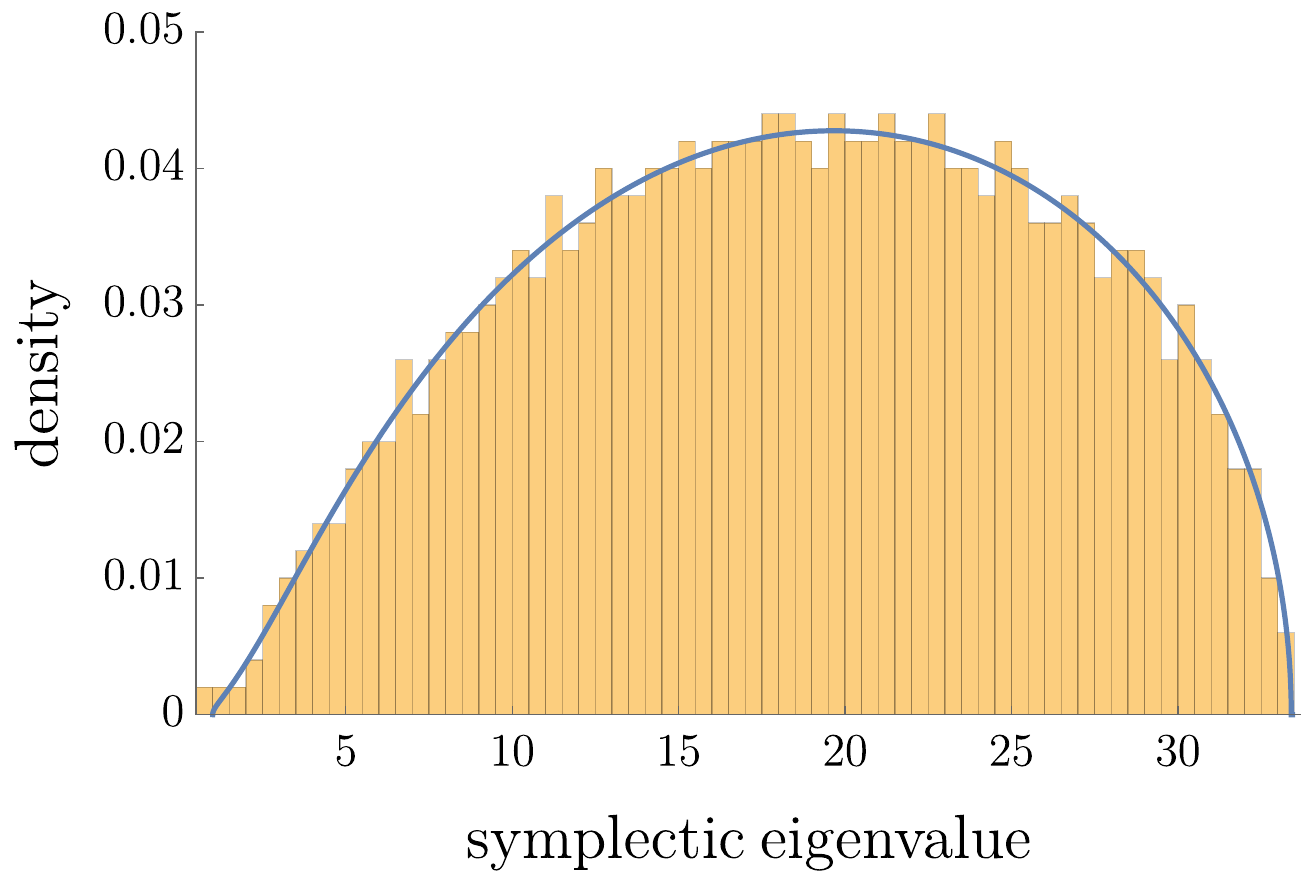}
    \caption{Numerical histograms for the symplectic eigenvalues of a random quantum covariance matrices for $n=1000$ versus theoretical densities. Left panel: $\sigma=1$; right panel: $\sigma = 10$.}
    \label{fig:symplectic-eigenvalues}
\end{figure}

Concerning the physical interpretation of the above result, the largest symplectic eigenvalue $\sqrt{F(\sigma)}$ can be understood at an \emph{energy upper bound} for the random quantum covariance matrix. It would be interesting to derive an explicit formula for the average \emph{energy per mode} of the Gaussian state, given in the limit $n \to \infty$ as the integral 
$$\int |x| \mathrm d(\mathrm{B} \boxtimes \mathrm{SC}_{R(\sigma), \sigma})(x).$$
Given that a tractable formula for the density of the limiting measure is unavailable in the general case, we can only compute it for special values of $\sigma$. In the case $\sigma=1$, we obtain an approximate value $2.49289$ for the energy per mode, to be compared wit the maximal energy $3.94262$ for the same value of $\sigma$.

\section{Entanglement and extendability of random quantum covariance matrices}\label{sec:entanglement}

In the last part of our investigation we look at the entanglement and extendability properties of the matrices in the set $\RQCM(2n, \sigma)$. As was reviewed in Section \ref{sec:background}, a Gaussian state in a bipartite state is separable if and only if its covariance matrix is completely extendable (Theorem \ref{th:2}). The complete extendability criterion is a semi-definite program (SDP) which we can use to numerically investigate the proportions of separable and entangled Gaussian states corresponding to the random quantum covariance matrices. Since the entanglement problem is formulated as an SDP, it renders it hard to analytically analyze. It is for this reason that we resort in this section to a numerical analysis, leaving analytical results for future investigation. For comparison we also explore the PPT criterion Eq. \eqref{eq:PPT-condition} and numerically analyze the proportion of PPT and non-PPT random quantum covariance matrices.

Furthermore, for entangled Gaussian states we can do a finer numerical analysis to see up to which value of $k$ the corresponding random quantum covariance matrices are $k$-extendable. As pointed out in Theorem \ref{th:3} we can take this maximum value of $k$ to depict the level of entanglement: the lower this maximum value of $k$ is the further away from complete extendability, or equivalently, from the set of separable states the covariance matrix is. Theorem \ref{th:3} also shows that the problem of checking $k$-extendability for some fixed value of $k$ can also be cast as an SDP which allows us to perform our numerical analysis described above. 

We note that in the SDPs when determining whether a random quantum covariance matrix represents a separable state or not or whether it is $k$-extendible or not we must do it within some numerical precision which we must choose. For example, we used the precision of $10^{-8}$ so that every RQCM that is within that range of being determined separable is labeled as separable. Thus, for individual samples the labels are not completely error-free but since we have used the same precision throughout our calculations, it still gives us a clear idea of the entanglement and extendibility properties of the set $\RQCM(2n, \sigma)$ as a whole. We also noticed that in certain cases (with very large $\sigma$ for example) the samples seem to be drawn from some kind of boundary of the set so in these cases the error caused by the numerical precision is even larger. In our calculations we tried to avoid such cases to the best of our abilities.

The numerical calculations were performed by using a \textsf{Mathematica} notebook containing numerical routines to sample random quantum covariance matrices and to test the extendability properties of the generated random quantum covariance matrices. The notebook is available at \cite{notebook}.

\subsection{Varying the total number of modes with even partition of subsystems}

Our first analysis was done in the case when the matrices in $\RQCM(2n,\sigma)$ with $\sigma=1$ are assumed to correspond to bipartite states with $n=m+m$ modes for various values of $m$. In particular, we chose the cases when $n=10$, $n=20$, $n=50$ and $n=100$ in which cases we could still obtain decent sample sizes and run the extendability SDPs in reasonable time. The portions of separable/entangled states and PPT/non-PPT states are represented in \cref{fig:m+m-sep-ppt}. For the entangled states we did the finer analysis of determining for each sample what is the maximum $k$ such that the matrix is $k$-extendable and our findings can be found in \cref{fig:m+m}. 

\begin{figure}[htb]
    \centering
    \includegraphics[width=0.4\textwidth]{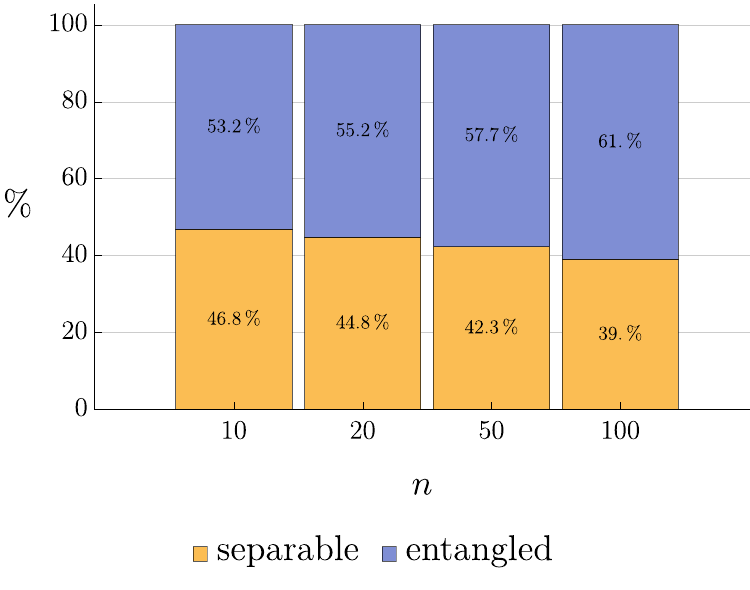} \qquad \includegraphics[width=0.4\textwidth]{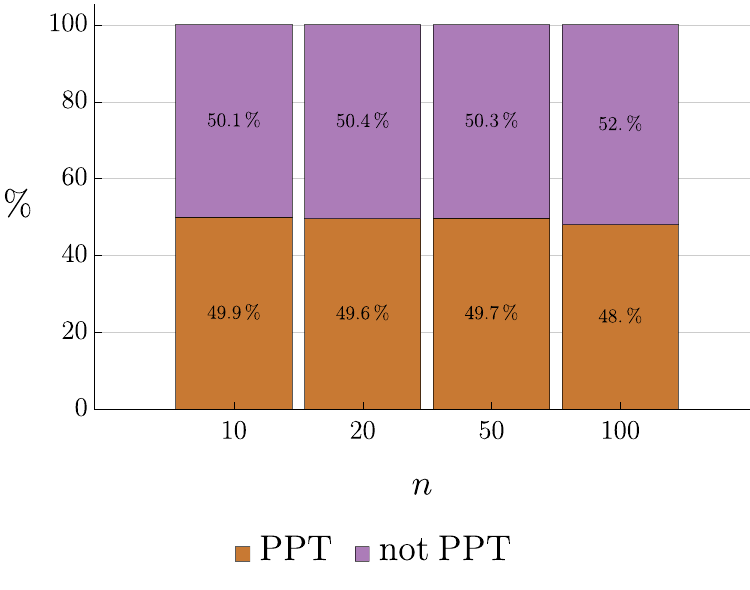} 
    \caption{Numerical histograms in the case of $n=m+m$ modes for the proportion of entangled/separable states (left) and PPT/non-PPT states (right) when $\sigma=1$. The sample sizes were 50 000 for $n=10$, 10 000 for $n=20$, 5 000 for $n=50$ and $400$ for $n=100$.}
    \label{fig:m+m-sep-ppt}
\end{figure}

\begin{figure}[htb]
    \centering
    \includegraphics[width=0.4\textwidth]{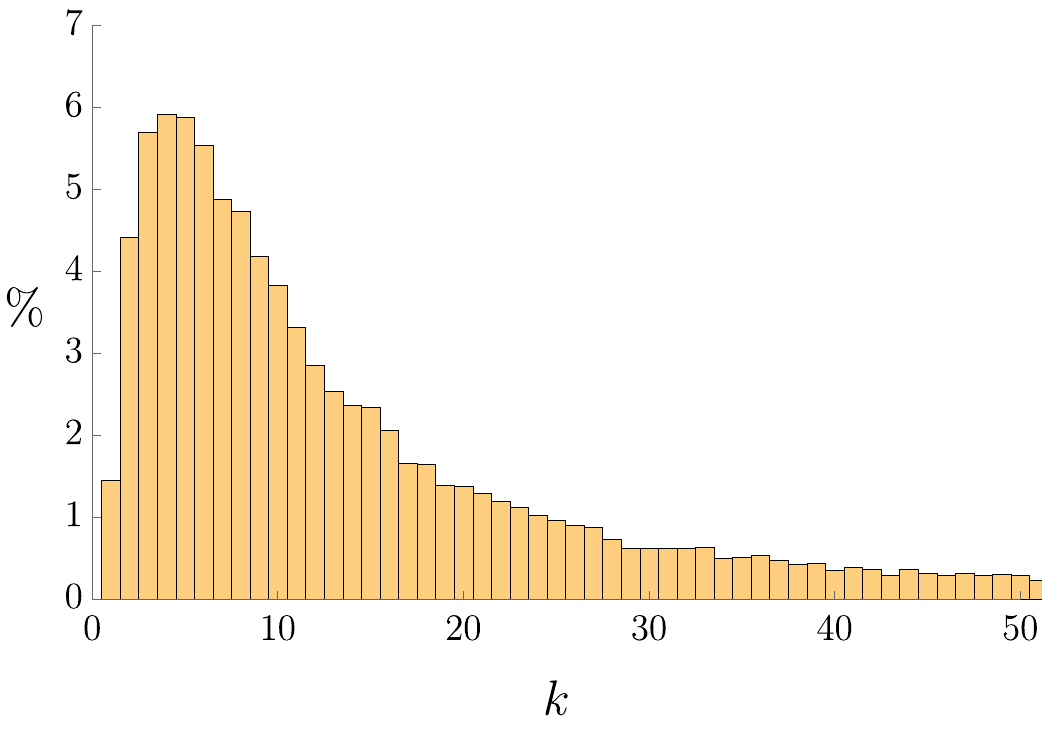} \qquad \includegraphics[width=0.4\textwidth]{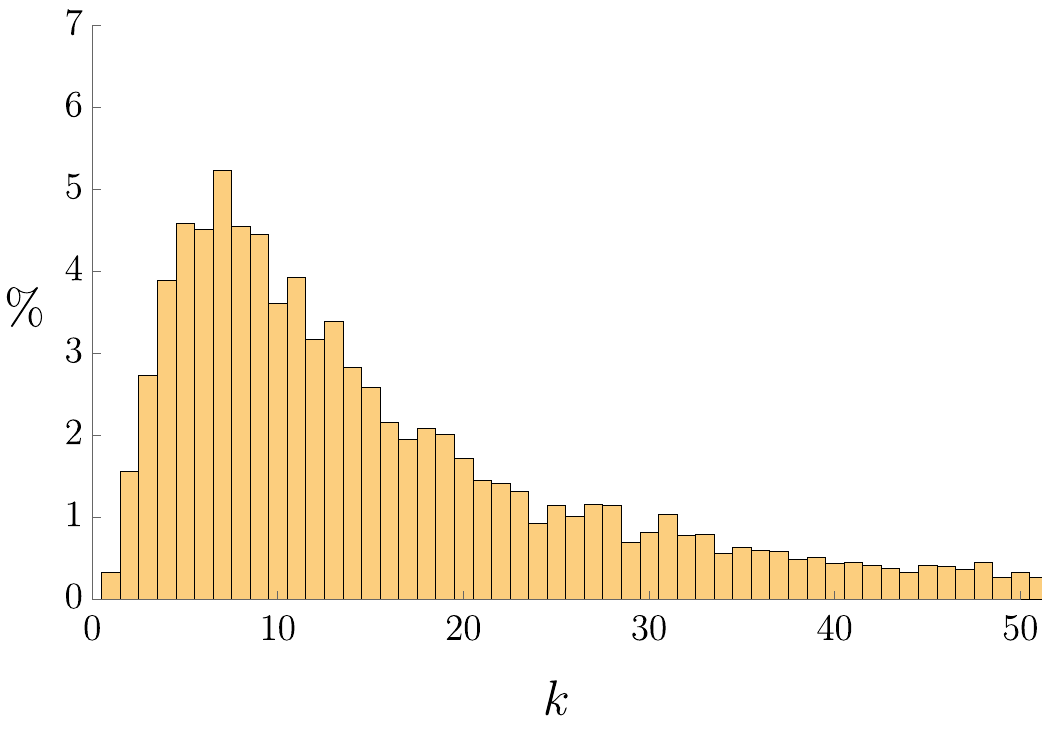} \\
    \includegraphics[width=0.4\textwidth]{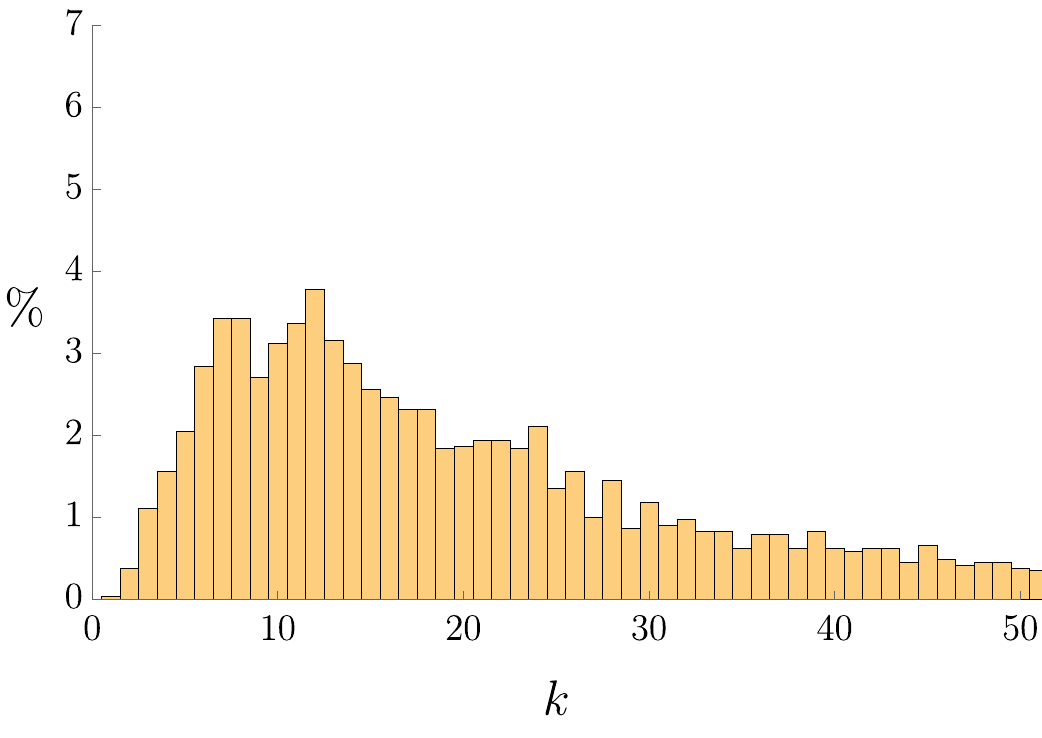} \qquad \includegraphics[width=0.4\textwidth]{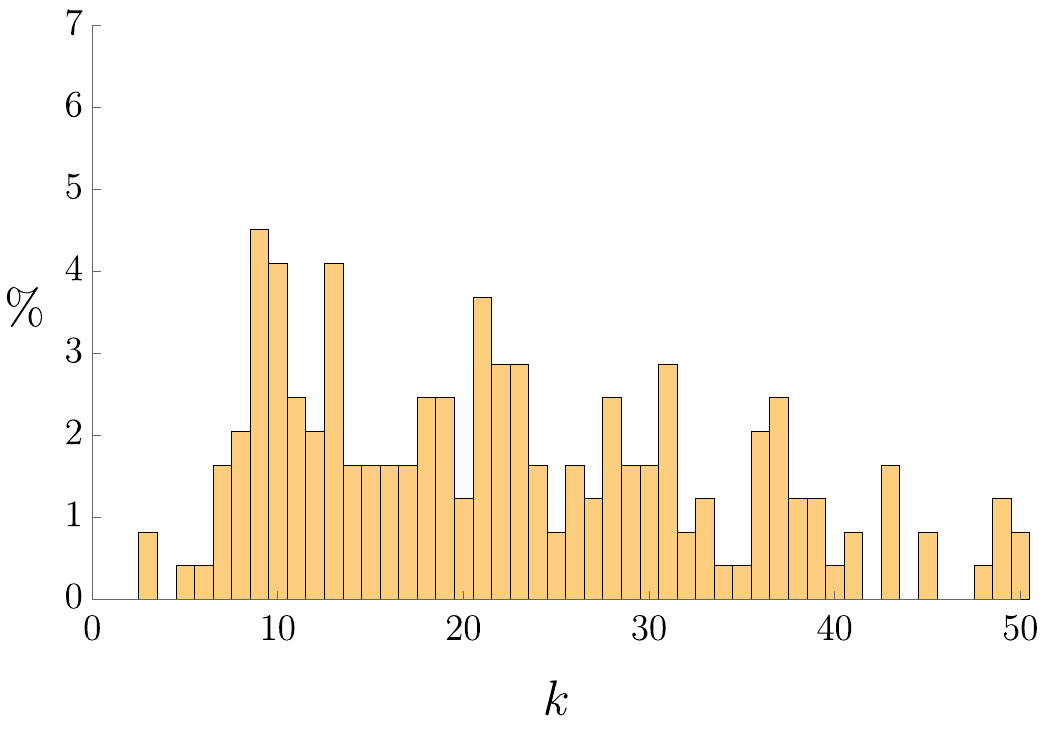}
    \caption{Numerical histograms in the case of $n=m+m$ modes ($m=5$ top left, $m=10$ top right, $m=25$ bottom left and $m=50$ bottom right) for the distribution of maximum values of $k$ for each entangled random quantum covariance matrix such that the matrix is $k$-extendable. The percentage is calculated from the total number of entangles states. The sample sizes were 50 000 for $n=10$, 10 000 for $n=20$, 5 000 for $n=50$ and $400$ for $n=100$.}
    \label{fig:m+m}
\end{figure}

From Fig. \ref{fig:m+m-sep-ppt} it would seem that with an even partition of subsystems, the portion of separable states decreases as we increase the total number of modes. However, for all number of modes, roughly half of the samples are PPT and the proportion of PPT states vs. non-PPT states seems not to be affected by the increase of the number of modes. Further numerical analysis for the minimal eigenvalues of the PPT condition (Eq. \eqref{eq:PPT-condition}) for the random quantum covariance matrices shows that for all different total number of modes (with equal partitioning of subsystems) the minimal eigenvalues seem to have a Gaussian distribution around zero. However, we find that the variance of the distribution decreases as the total number of modes increases. This is supported by Theorem \ref{thm:PPT} which states that in the limit $n \to \infty$ the minimum eigenvalue of the PPT condition (Eq. \eqref{eq:PPT-condition})  for a random quantum covariance matrix converges to zero. Thus, at fixed $n$ the minimum eigenvalues have fluctuations around its average but as $n$ increases the fluctuations decrease and in the limit $n \to \infty$ the fluctuations disappear making all the states almost surely PPT.

 Although a higher portion of states are entangled for increasing number of modes, from Fig. \ref{fig:m+m} we see that as the total number of modes increases, the entangled states become less entangled, i.e., they are closer to the set of separable states. However, since we are only looking at even partitions, so that as the total number of modes increases also the subsystem sizes increase, we cannot conclude from this analysis whether the increase of entangled states and the decrease in the level of entanglement results from the increase of the total number of modes or from the increase of the subsystem sizes. Thus, next we will consider the case when we fix one of the subsystem sizes and increase the total number of modes.

\subsection{Varying the total number of modes with fixed subsystem size} \label{subsec:even-split}

For our subsequent analysis we chose to fix the second subsystem to consist of only two modes; for $m+1$ modes it is known that the PPT criterion Eq. \eqref{eq:PPT-condition} detects all the separable states.

In particular, we looked at the case when we have a total of $n=m+2$ modes for the cases when $n=10$, $n=20$, $n=50$ and $n=100$. We kept the choice $\sigma =1$ as before. The portions of separable/entangled states and PPT/non-PPT states are represented in Fig. \ref{fig:m+2-sep-ppt}. For the entangled states we did the finer analysis of determining for each sample what is the maximum $k$ such that the matrix is $k$-extendable and our findings can be found in Fig. \ref{fig:m+2}. 

\begin{figure}[htb]
    \centering
    \includegraphics[width=0.43\textwidth]{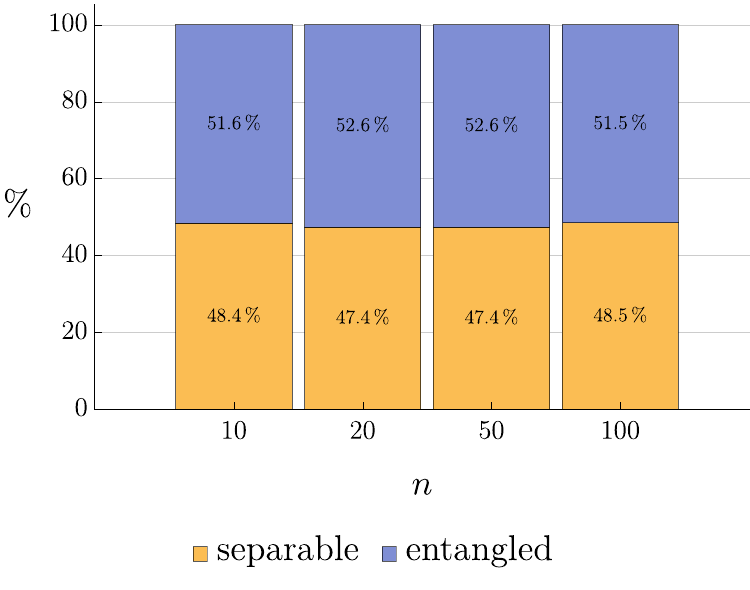} \qquad \includegraphics[width=0.43\textwidth]{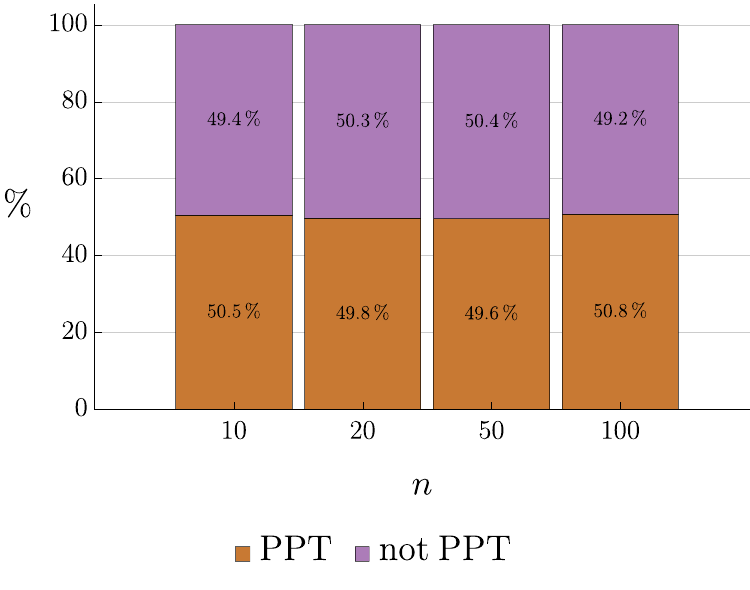} 
    \caption{Numerical histograms in the case of $n=m+2$ modes for the proportion of entangled/separable states (left) and PPT/non-PPT states (right) when $\sigma=1$ and $n \in \{10,20,50,100\}$. For all $n$ we used 5 000 samples.}
    \label{fig:m+2-sep-ppt}
\end{figure}

\begin{figure}[htb]
    \centering
    \includegraphics[width=0.4\textwidth]{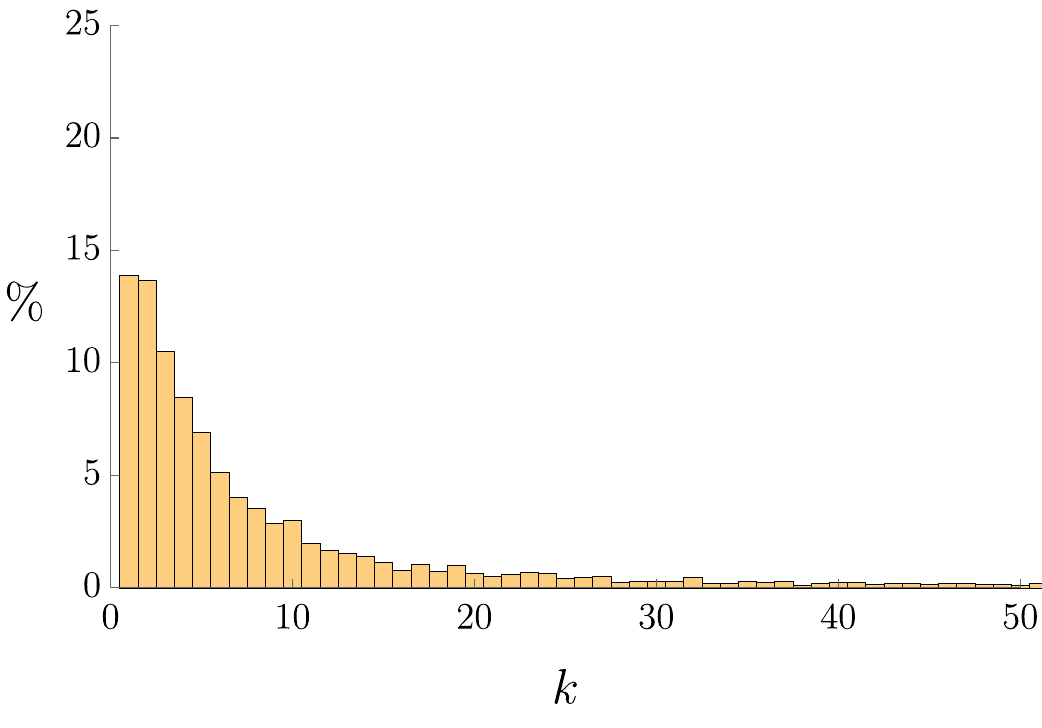} \qquad \includegraphics[width=0.4\textwidth]{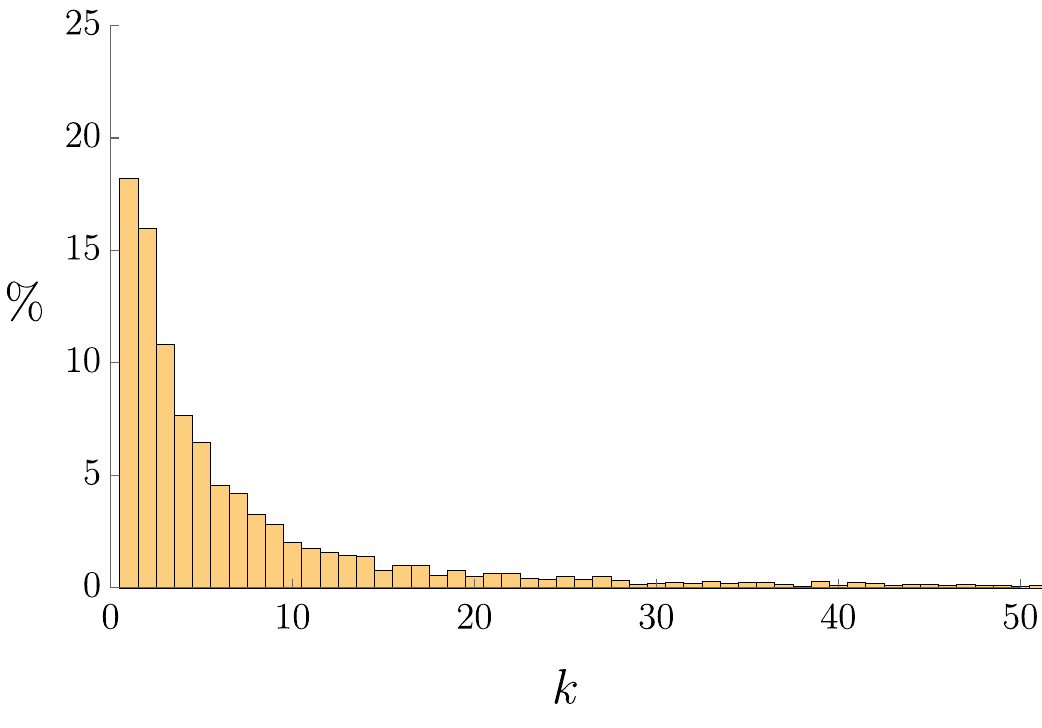} \\
    \includegraphics[width=0.4\textwidth]{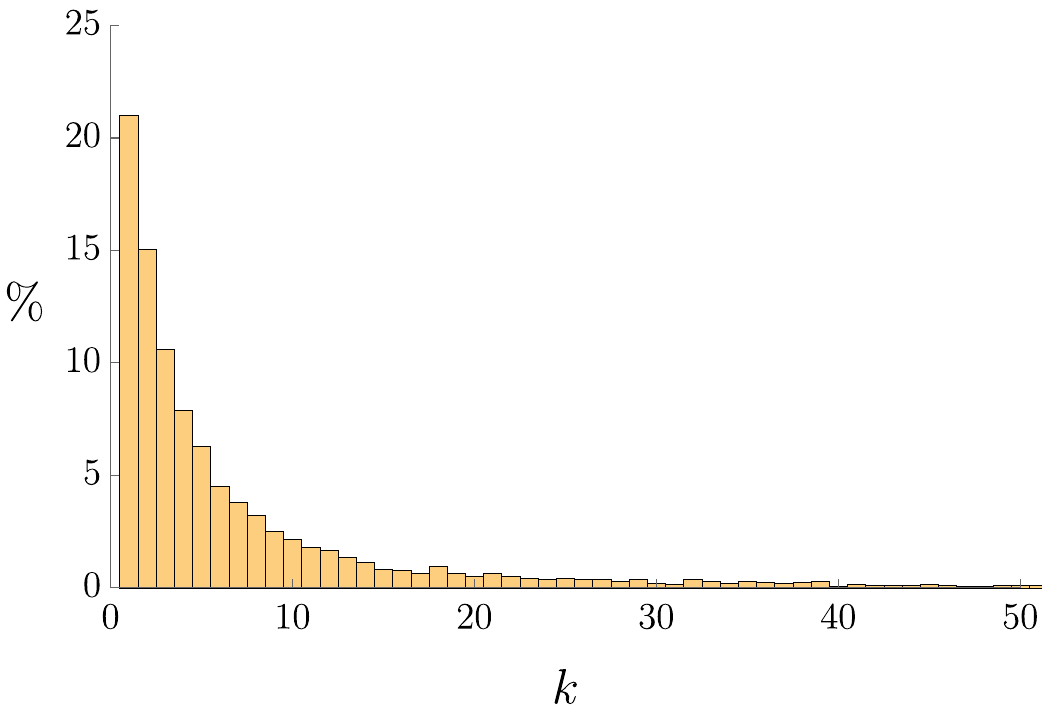} \qquad \includegraphics[width=0.4\textwidth]{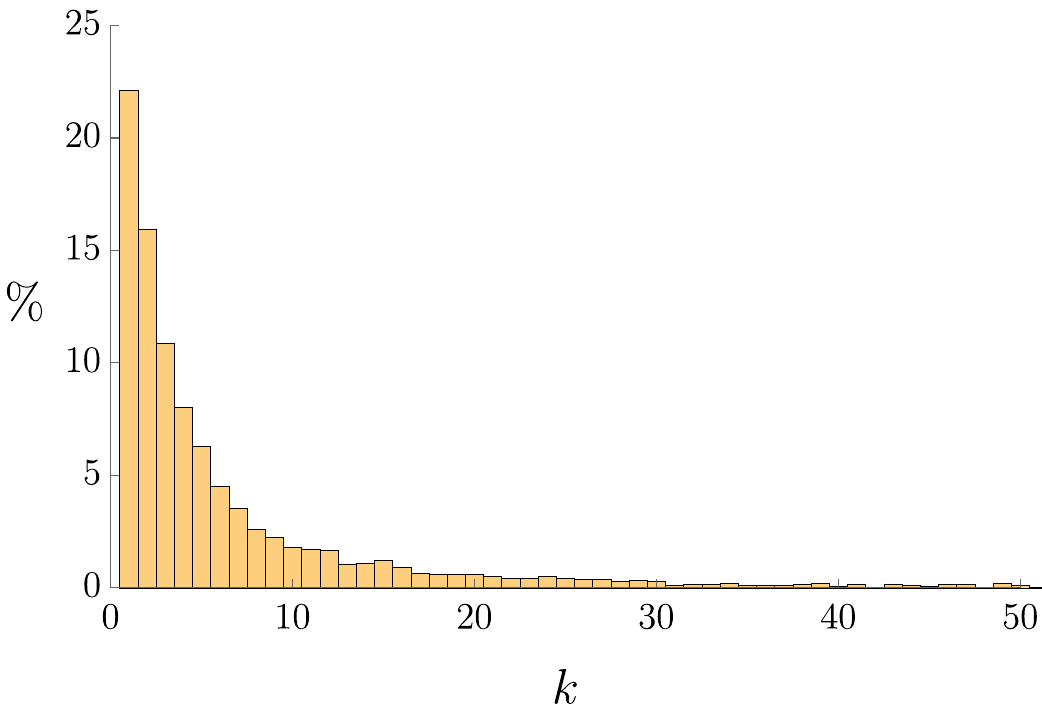}
    \caption{Numerical histograms in the case of $n=m+2$ modes ($n=10$ top left, $n=20$ top right, $n=50$ bottom left and $n=100$ bottom right) for the distribution of maximum values of $k$ for each entangled random quantum covariance matrix such that the matrix is $k$-extendable. The percentage is calculated from the total number of entangles states. For all $n$ we used 10 000 samples.}
    \label{fig:m+2}
\end{figure}

Now we see from Fig. \ref{fig:m+2-sep-ppt} that when the second subsystem size is fixed to two modes then the number of separable states remains roughly the same even when the total number of modes is increased. This would suggest that the increase in the number of entangled states which we witnessed in Sec. \ref{subsec:even-split} in the case of even subsystem partition is more of a result from the increase of the size of one of the subsystems rather than the increase of the total number of modes. For the proportion of PPT/non-PPT states we find a similar behaviour as in the previous case: roughly half of the samples are PPT states and even a further numerical analysis on the minimal eigenvalue distributions of the PPT condition (\cref{eq:PPT-condition}) for the random quantum covariance matrices again confirms this.

On the other hand, from Fig. \ref{fig:m+2} we see that when the second subsystem is fixed to have precisely two modes then the level of entanglement of the entangled states increases as the total number of modes, and thus the size of the first subsystem, increases. This is contrary to Fig. \ref{fig:m+m} in the case of even partition of subsystems which leads us to believe that the level of entanglement of the entangled random quantum covariance matrices is more proportional to the size of the first subsystem rather than the total number of modes. In order to see more evidence for this we will next fix the total number of modes and merely vary the partition of the subsystems.

\subsection{Varying the partition size with fixed number of total modes}\label{subsec:uneven-split}

Next we looked at the case when we have a fixed number of total modes $n=m+l$, where we chose  $n=20$ (for convenience), and considered the cases when the number of modes of the first subsystem are $m=2$, $m=10$ and $m=18$. Again we kept $\sigma=1$. The portions of separable/entangled states and PPT/non-PPT states are represented in Fig. \ref{fig:n=20-sep-ppt}. For the entangled states we did the finer analysis of determining for each sample what is the maximum $k$ such that the matrix is $k$-extendable and our findings can be found in Fig. \ref{fig:n=20}. 

\begin{figure}[htb]
    \centering
    \includegraphics[width=0.45\textwidth]{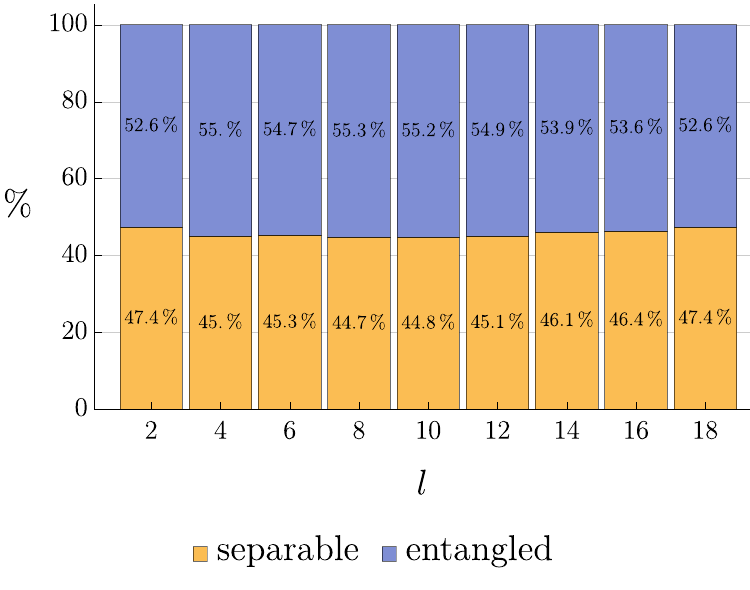} \qquad \includegraphics[width=0.45\textwidth]{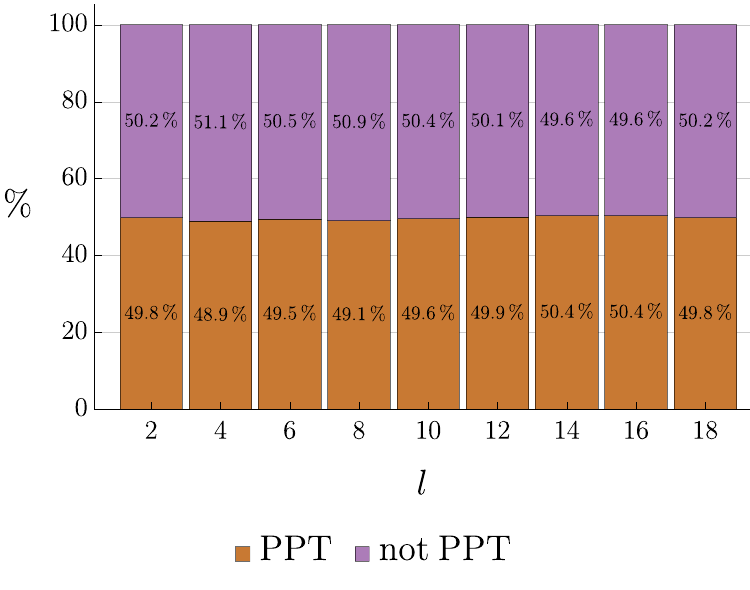} 
    \caption{Numerical histograms in the case of $n=m+l=20$ modes for the proportion of entangled/separable states (left) and PPT/non-PPT states (right) when $\sigma=1$ and $l \in \{2,4,6,8,10,12,14,16,18\}$. For all subsystem partitions we used the same 10 000 samples.}
    \label{fig:n=20-sep-ppt}
\end{figure}

\begin{figure}[htb]
    \centering
    \includegraphics[width=0.31\textwidth]{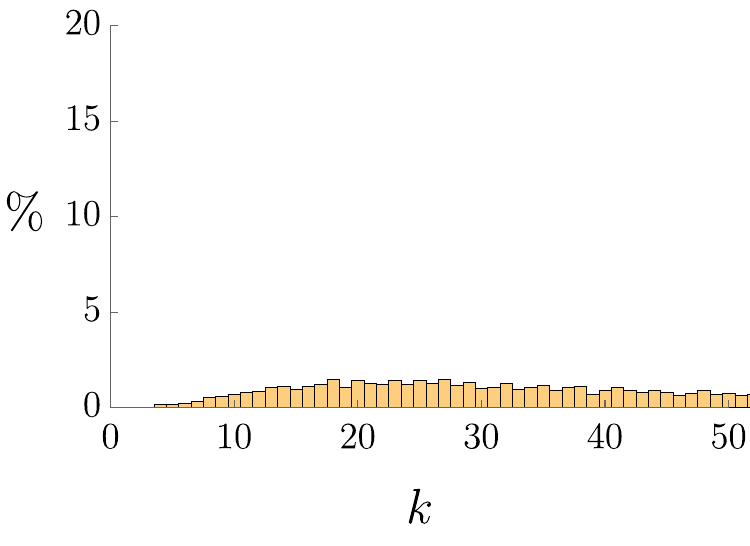} \quad
    \includegraphics[width=0.31\textwidth]{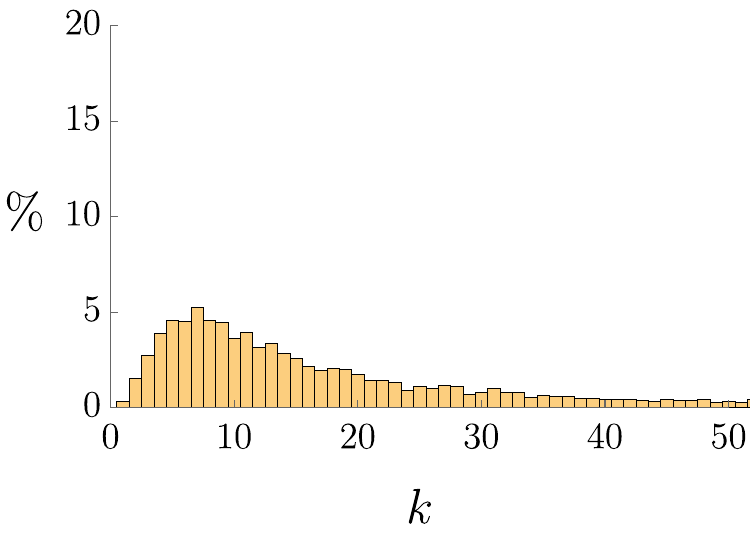} \quad
    \includegraphics[width=0.31\textwidth]{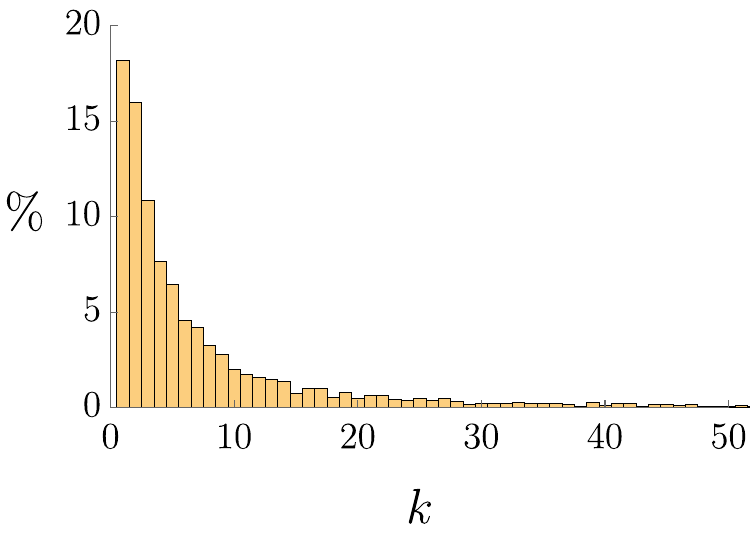}
    \caption{Numerical histograms in the case of $n=m+l=20$ modes ($m=2$ left, $m=10$ center and $m=18$ right) for the distribution of maximum values of $k$ for each entangled random quantum covariance matrix such that the matrix is $k$-extendable. The percentage is calculated from the total number of entangles states. For all subsystem partitions we used the same 10 000 samples.}
    \label{fig:n=20}
\end{figure}

By looking at Fig. \ref{fig:n=20-sep-ppt} we find that as one of the subsystem sizes increases, then the number of separable states increases. This seems to support the claim we made earlier in Sec. \ref{subsec:uneven-split}: it is the increase of the size of one of the subsystems rather than the increase of the total number of modes which results in the increase of the number of the entangled states. Also Fig. \ref{fig:n=20} seems to drastically support our previous conclusion that it is the size of the first subsystem which the level of entanglement of the entangled states is proportional to. For the proportion of PPT/non-PPT states we again find a similar behaviour as in the previous cases: roughly half of the samples are PPT states and even a further numerical analysis on the minimal eigenvalue distributions of the PPT condition (\cref{eq:PPT-condition}) for the random quantum covariance matrices confirms this.

\subsection{\texorpdfstring{Varying the standard deviation $\sigma$}{Varying the standard deviation sigma}}

\begin{figure}[h]
    \centering
    \includegraphics[width=0.43\textwidth]{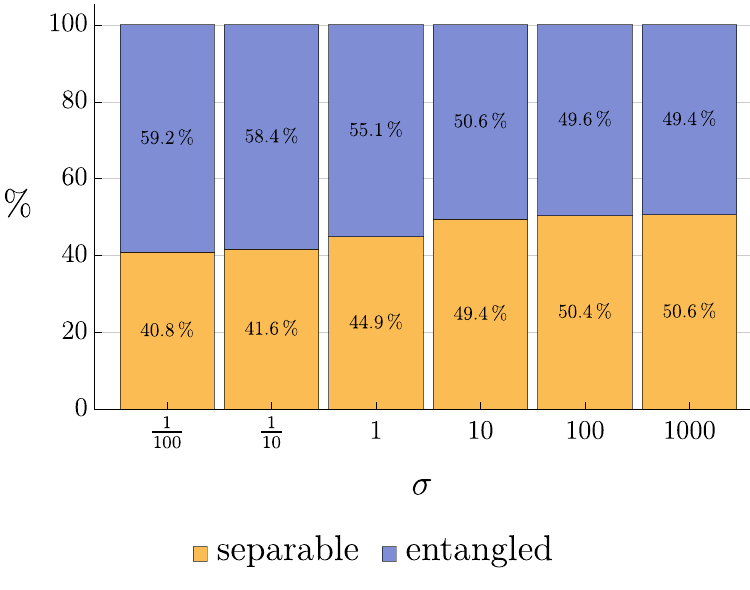} \qquad \includegraphics[width=0.43\textwidth]{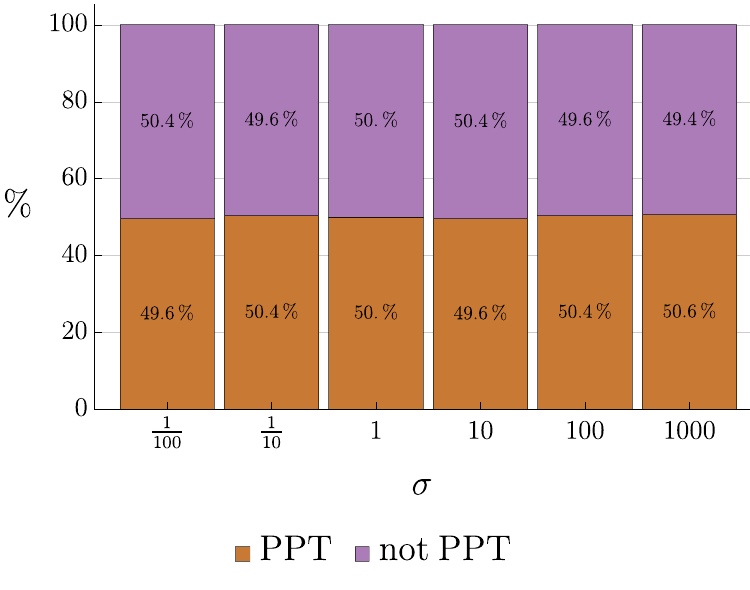} 
    \caption{Numerical histograms in the case of $n=10+10$ modes for the proportion of entangled/separable states (left) and PPT/non-PPT states (right) when $\sigma\in\{\frac{1}{100},\frac{1}{10},1,10,100,1000\}$. The sample size was 10 000 for all $\sigma$.}
    \label{fig:sigma-sep-ppt}
\end{figure}

Lastly, we want to explore how changing the parameter $\sigma$ controlling the spread of the $\GOE$ random matrix affects the entanglement properties of the elements in $\RQCM(2n, \sigma)$. In the previous cases we had fixed $\sigma=1$ but now we will vary $\sigma$ in the case when we have a fixed number of total modes $n=20$ with an even partition $n=10+10$. In particular, we look at different values of the variance parameter $\sigma\in\{\frac{1}{100},\frac{1}{10},1,10,100,1000\}$. The details of the sample sizes and the portions of separable states and PPT states is represented in \cref{fig:sigma-sep-ppt}. For the entangled states we again did the finer analysis of determining for each sample what is the maximum $k$ such that the matrix is $k$-extendible. Our findings can be found in \cref{fig:sigma-extendible}.

From \cref{fig:sigma-sep-ppt} we see that as we increase $\sigma$ also the fraction of separable states increases. This corresponds to the intuition that larger $\sigma$ yields more spread-out GOE matrices $G$ with a larger probability that $H=G$. On the other hand, models with small $\sigma$ would need to be shifted to land in the set of quantum covariance matrices. Since the resulting matrix is on the boundary of this set, the probability that it is separable tends to be smaller, see \cref{fig:closest-QCM}. In a similar vein, \cref{fig:sigma-extendible} shows that those states that are entangled will have a higher level of entanglement as $\sigma$ increases. Also it is noteworthy that as $\sigma$ increases the PPT criterion seems to be better at detecting the separable states.

\begin{figure}[t]
    \centering
    \includegraphics[width=0.4\textwidth]{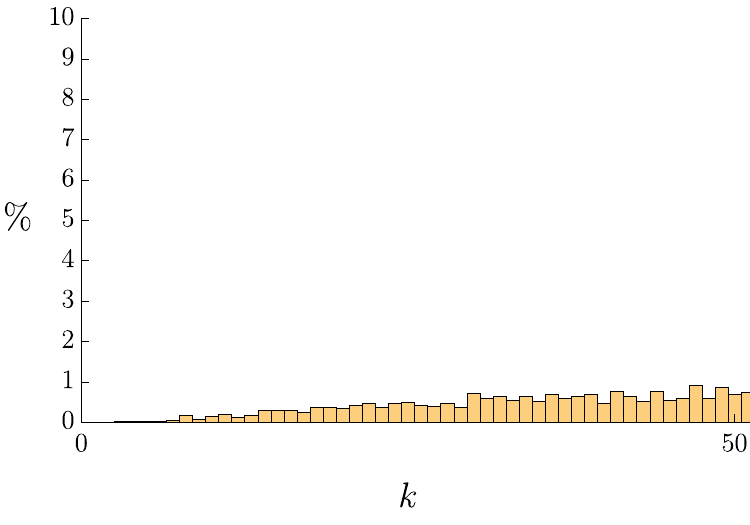} \qquad \includegraphics[width=0.4\textwidth]{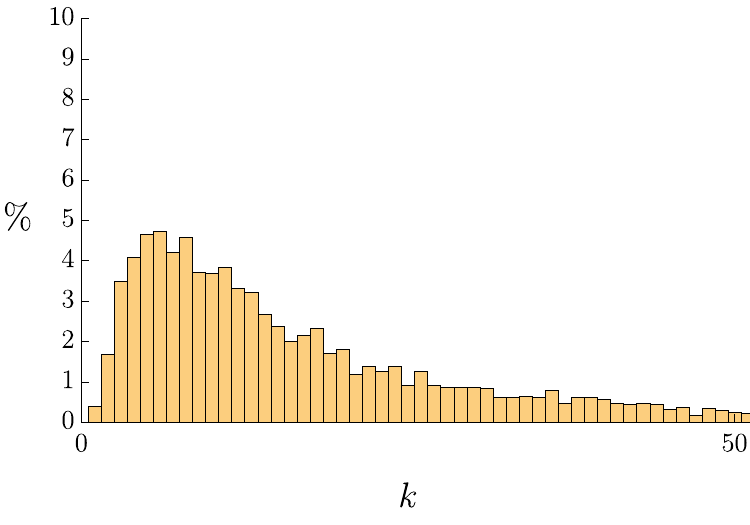} \\
    \includegraphics[width=0.4\textwidth]{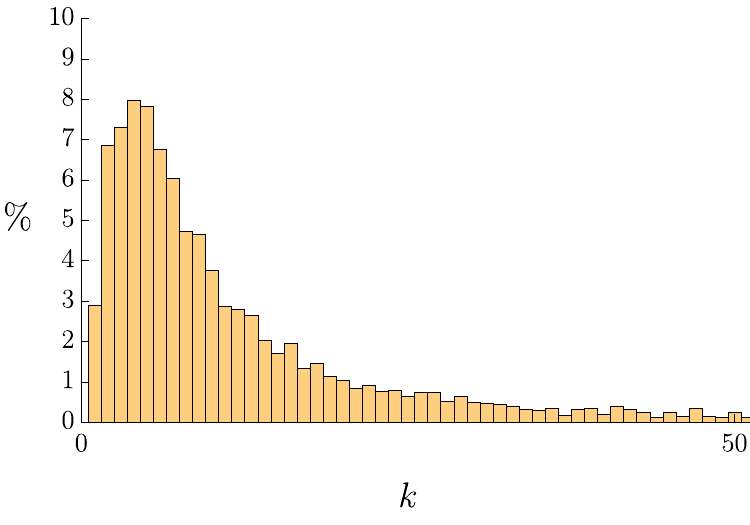} \qquad \includegraphics[width=0.4\textwidth]{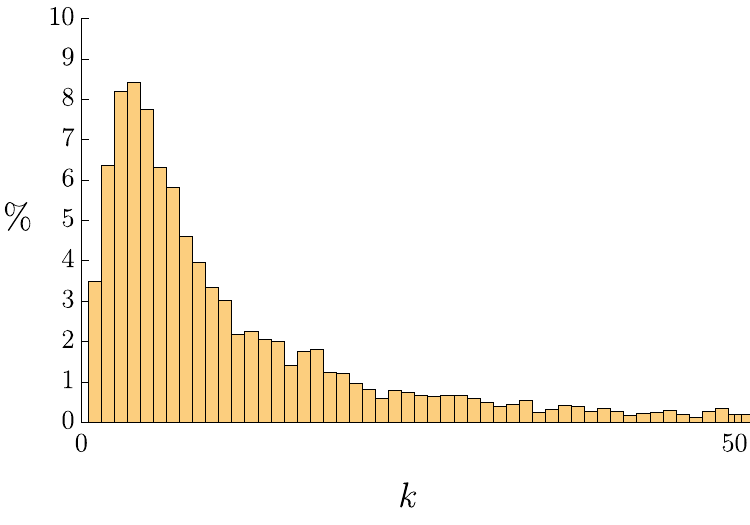}
    \caption{Numerical histograms in the case of $n=10+10$ modes ($\sigma=\frac{1}{10}$ top left, $\sigma=1$ top right, $\sigma=10$ bottom left and $\sigma=100$ bottom right) for the distribution of maximum values of $k$ for each entangled random quantum covariance matrix such that the matrix is $k$-extendable.}
    \label{fig:sigma-extendible}
\end{figure}

\bigskip 

\noindent\textbf{Acknowledgements.} 
LL acknowledges support from the European Union’s Horizon 2020 Research and Innovation Programme under the Programme SASPRO 2 COFUND Marie Sklodowska-Curie grant agreement No. 945478 as well as from projects APVV22-0570 (DeQHOST) and VEGA 2/0183/21 (DESCOM). LL also acknowledges support from the collaboration of the French Embassy in Slovakia, the French Institute in Slovakia and the Slovak Ministry of Education, Science, Research and Sports for funding the research visit to Toulouse where this project was started.
I.N.~was supported by the ANR projects \href{https://esquisses.math.cnrs.fr/}{ESQuisses}, grant number ANR-20-CE47-0014-01 and \href{https://www.math.univ-toulouse.fr/~gcebron/STARS.php}{STARS}, grant number ANR-20-CE40-0008, and by the PHC program \emph{Star} (Applications of random matrix theory and abstract harmonic analysis to quantum information theory).
R.S. acknowledges financial support from \href{https://dst.gov.in}{DST}, Govt. of India, project number  DST/ICPS/QuST/Theme- 2/2019/General Project Q-90, and the French \href{https://www.cnrs.fr/en}{CNRS}, for supporting a visit to Toulouse, where this project was started. 

\bibliographystyle{alpha}
\newcommand{\etalchar}[1]{$^{#1}$}
\def\polhk#1{\setbox0=\hbox{#1}{\ooalign{\hidewidth
  \lower1.5ex\hbox{`}\hidewidth\crcr\unhbox0}}} \def\cprime{$'$}

\end{document}